\let\csname equation*\endcsname\relax
\let\csname endequation*\endcsname\relax
\newtheorem{definition}{Definition}
\newtheorem{remark}{Remark}
\newtheorem{theorem}{Theorem}
\newtheorem{lemma}{Lemma}
\newcommand{\real}{\mathbb{R}}
\newcommand{\complex}{\mathbb{C}}
\newcommand{\cL}{\mathcal{L}}
\newcommand{\bx}{\mathbf{x}}
\newcommand{\bn}{\mathbf{n}}
\newcommand{\by}{\mathbf{y}}
\newcommand{\bh}{\mathbf{h}}
\newcommand{\bz}{\mathbf{z}}
\newcommand{\bD}{\mathbf{D}}
\newcommand{\bA}{\mathbf{A}}
\newcommand{\bM}{\mathbf{M}}
\newcommand{\bL}{\mathbf{L}}
\newcommand{\hphi}{\widehat{\phi}}
\newcommand{\hu}{\widehat{u}}
\newcommand{\hM}{\widehat{M}}
\newcommand{\bzero}{\mathbf{0}}
\newcommand{\ta}{\widetilde{a}}
\newcommand{\tb}{\widetilde{b}}
\newcommand{\ii}{\imath}
\newcommand{\Omegat}{\widetilde \Omega}
\newcommand{\supp}{\mathrm{supp}\text{ }}
\newcommand{\ttrue}{\text{true}}
\newcommand{\tmeas}{\text{meas}}
\renewcommand{\Re}{\text{Re}}
\renewcommand{\Im}{\text{Im}}
\DeclareMathOperator{\diag}{diag}
\begin{document}

% Title and Author
\title[Restarted inverse Born series]{Restarted inverse Born series for
the\\ Schr\"odinger problem with\\ discrete internal measurements}
\author{Patrick Bardsley and Fernando Guevara Vasquez}

\address{Mathematics Department, University of Utah, 155 S 1400 E RM
233, Salt Lake City, UT 84112-0090, USA}
\eads{\mailto{bardsley@math.utah.edu}, \mailto{fguevara@math.utah.edu}}

%%%%%%%%%%%%%%%%%%%%%%%%%%%%%%%%%%%%%%%%%%%%%%%%%%%%%%%%%%%%%%%%%%%%%%%%
\begin{abstract}
Convergence and stability results for the inverse Born
series [Moskow and Schotland, Inverse Problems, 24:065005, 2008] are
generalized to mappings between Banach spaces. We show that by
restarting the inverse Born series one obtains a class of iterative methods
containing the Gauss-Newton and Chebyshev-Halley methods. We use the
generalized inverse Born series results to show convergence of
the inverse Born series for the Schr\"odinger problem with discrete
internal measurements. In this problem, the Schr\"odinger potential is
to be recovered from a few measurements of solutions to the
Schr\"odinger equation resulting from a few different source terms. An
application of this method to a problem related to transient hydraulic tomography
is given, where the source terms model injection and measurement wells.
\end{abstract}

% MSC 2010:
%  65N21 - Numerical Analysis - Inverse Problems
%  35J10 - Partial Differential Equations - Schr\"odinger operator
\ams{65N21, 35J10}
\submitto{\IP}
\maketitle

%%%%%%%%%%%%%%%%%%%%%%%%%%%%%%%%%%%%%%%%%%%%%%%%%%%%%%%%%%%%%%%%%%%%%%%%
% Intro
% !TEX root = ItInvBornSchroSparse.tex
%%%%%%%%%%%%%%%%%%%%%%%%%%%%%%%%%%%%%%%%%%%%%%%%%%%%%%%%%%%%%%%%%%%%%%%%
% Introduction
\section{Introduction}\label{sec:intro}
We consider the problem of finding a Schr\"odinger potential $q(\bx)$
(which may be complex) from discrete internal measurements of the solution
$u_i(\bx)$ to the Schr\"odinger equation 
\begin{equation}\label{eq:SchroEq}
\left\{
\begin{aligned}
	-\Delta u_i + qu_i &= \phi_i, &&\text{ for }\bx\in\Omega,\\
	u_i	&= 0, &&\text{ for }\bx\in\partial\Omega,
\end{aligned}
\right.
\end{equation}
in a closed bounded set $\Omega\subset\real^d$ for $d \geq 2$, and for
different (known) source terms $\phi_i \in C^\infty(\Omega)$,
$i=1,\ldots,N$. We further assume $q\in L^\infty(\Omega)$  is known in
$\Omega\backslash\Omegat$, where $\Omegat$ is a closed subset of
$\Omega$ with a finite distance separating $\partial\Omegat$ and
$\partial\Omega$.

The internal measurements we consider are  of the form
\begin{equation}\label{eq:sparsedata}
D_{i,j} = \int_\Omega \phi_j(\bx)u_i(\bx)d\bx, \text{for
$i,j=1,\ldots,N$}.
\end{equation}
The measurement $D_{i,j}$ is a weighted average of the field $u_i$
resulting from the $i-$th source term. Although it is not necessary for
our method to work, we assume for simplicity the same source terms are
used as weights for the averages.

A motivation for this inverse Schr\"odinger problem is transient hydraulic
tomography (see e.g. \cite{Cardiff:2011:THT} for a review). The
hydraulic pressure or head $v(\bx,t)$ in an underground reservoir or
aquifer $\Omega$ resulting from a source $\psi(\bx,t)$ (the injection
well) satisfies the initial value problem
\begin{equation}
\left\{
\begin{aligned}
Sv_t &= \nabla\cdot(\sigma\nabla v) - \psi, &&\text{ for }
\bx\in\Omega,~t>0,\\
   v(\bx,t) &= 0,  &&\text{ for } \bx\in\partial\Omega,~t>0,\\
   v(\bx,0) &= g(\bx), &&\text{ for } \bx\in\Omega.
\end{aligned}
\right.
\label{eq:ht}
\end{equation}
Here $S(\bx)$ is the storage coefficient and $\sigma(\bx)$ the hydraulic
conductivity of the aquifer. The inverse problem is to image both
$S(\bx)$ and $\sigma(\bx)$ from a series of measurements made by
fixing a source term at one well, and measuring the resulting pressure
response at the other wells. We show in section~\ref{sec:ht} that
the inverse problem of reconstructing $S(\bx)$ and $\sigma(\bx)$ from
these sparse (and discrete) internal pressure measurements, can be
recast as an inverse Schr\"odinger problem with discrete measurements as
in \eqref{eq:sparsedata}. 

The main tool we use here for solving the inverse Schr\"odinger problem
is inverse Born series. Inverse Born series have been used to solve
inverse problems in different contexts such as optical tomography
\cite{Markel:2003:IPOD,Markel:2007:CBS,Moskow:2009:NSI,Moskow:2008:CSI},
the Calder\'on or electrical impedance tomography problem \cite{Arridge:2012:IBS}
and in inverse scattering for the wave equation \cite{Kilgore:2012:IBS}. 

In section~\ref{sec:born} we generalize the inverse Born series
convergence results of \citet{Moskow:2008:CSI} and
\citet{Arridge:2012:IBS}, to nonlinear mappings between Banach spaces.
The convergence results of inverse Born series in this generalized
setting are given in section~\ref{sec:bornconv} and proved in
\ref{app:bornproof}, following the same pattern of the proofs
in \cite{Moskow:2008:CSI,Arridge:2012:IBS}.  This new framework is
applied in section~\ref{sec:ibex} to a few problems that have been
solved before with inverse Born series. We also show that both forward
and inverse Born series are closely related to Taylor series.  Since the
cost of calculating the $n-$th term in an inverse Born series grows
exponentially with $n$, we restart it after having computed a few $k$
terms (i.e. we truncate the series to $k$ terms and iterate). We show in
section~\ref{sec:iterated} that restarting the inverse Born series gives
a class of iterative methods that includes the Gauss-Newton and
Chebyshev-Halley methods. For the discrete measurements Schr\"odinger problem,
we prove that the necessary conditions for convergence of the inverse
Born series are satisfied (section~\ref{sec:sparsebounds}). Then in
section~\ref{sec:ht}, we explain how the transient hydraulic tomography
problem can be transformed into a discrete measurement Schr\"odinger
problem. Finally in section~\ref{sec:numerics} we present numerical
experiments comparing the performance of inverse Born series with other
iterative methods and their effectiveness for reconstructing the
Schr\"odinger potential in \eqref{eq:SchroEq} and for solving the transient
hydraulic tomography problem. We conclude in
section~\ref{sec:discussion} with a summary of our main results.

%%%%%%%%%%%%%%%%%%%%%%%%%%%%%%%%%%%%%%%%%%%%%%%%%%%%%%%%%%%%%%%%%%%%%%%%
% Forward an inverse Born series
% !TEX root = ItInvBornSchroSparse.tex
%%%%%%%%%%%%%%%%%%%%%%%%%%%%%%%%%%%%%%%%%%%%%%%%%%%%%%%%%%%%%%%%%%%%%%%%  
\section{Forward and inverse Born series in Banach spaces}
\label{sec:born}

We start by extending the notion of Born series and inverse Born series
\cite{Markel:2003:IPOD,Moskow:2008:CSI} to operators between Banach
spaces. The idea being to give a common framework for the convergence
proofs of the inverse Born series for diffuse waves
\cite{Moskow:2008:CSI}, the Calder\'on problem \cite{Arridge:2012:IBS}
and the discrete internal measurements Schr\"odinger problem. This generalization
also highlights that the inverse Born series are a systematic way of
finding non-linear approximate inverses for non-linear mappings. The
resulting approximate inverses are valid locally and have guaranteed
error estimates.

In sections~\ref{sec:fwdborn} and \ref{sec:invborn} we define forward
and inverse Born series for a mapping $f$ from a Banach space $X$ (the
parameter space) to another Banach space $Y$ (the data space).  Then in
section~\ref{sec:bornconv} we state local convergence results for
inverse Born series in Banach spaces that are valid under mild
assumptions on the forward Born series. The proofs are included in the
\ref{app:bornproof} as they are patterned after the proofs in
\cite{Moskow:2008:CSI,Arridge:2012:IBS}.  Examples of forward and
inverse Born series are included in section~\ref{sec:ibex}.

%%%%%%%%%%%%%%%%%%%%%%%%%%%%%%%%%%%%%%%%%%%%%%%%%%%%%%%%%%%%%%%%%%%%%%%%
\subsection{Forward Born series}
\label{sec:fwdborn}
Let $X$ and $Y$ be Banach spaces and consider a mapping $f: X \to Y$. In
inverse problems applications $X$ is typically the parameter space and $Y$ the
data or measurements space. The forward problem is to find the measurements $y
= f(x)$ from known parameters $x$. The inverse problem is to estimate the
parameters $x$ knowing the measurements $y$.

Born series involve operators in $\cL ( X^{\otimes n} , Y)$, i.e.
bounded linear operators from $X^{\otimes n}$ to $Y$. Here tensor
products are used to define the Banach space 
\[ 
 X^{\otimes n} = \underbrace{X \otimes \cdots \otimes X}_{\text{$n$ times}},
\] 
endowed with the norm
 \begin{equation} \label{eq:tpnorm}
  \| x_1 \otimes \cdots \otimes x_n \|_{X^{\otimes n}} = \| x_1 \|_X \cdot \cdots \cdot \| x_n \|_X.
 \end{equation}
Notice that a map $a \in \cL ( X^{\otimes n} , Y)$ can be identified to a bounded multilinear (or $n-$linear) map $\ta: X^n \to Y$ defined by:
\[
 \ta(x_1,\ldots,x_n) = a(x_1 \otimes \cdots \otimes x_n).
\]

Forward Born series express the measurements for a parameter $x+h \in X$
near a known parameter $x \in X$, assuming knowledge of $y = f(x)$.

\begin{definition}\label{def:born}
A nonlinear map $f:X \to Y$ admits a Born series expansion at $x \in X$
if there are are bounded linear operators $a_n \in \cL( X^{\otimes n} ,
Y)$ (possibly depending on $x$) such that
\begin{equation}
 d(h) =  f(x + h) - f(x) = \sum_{n=1}^\infty a_n(h^{\otimes n}),
 \label{eq:born}
\end{equation}
and the $a_n$ satisfy the bound
\begin{equation}
 \|a_n\| \leq \alpha \mu^n ~ \text{for $n=0,1,\ldots$}.
 \label{eq:fwdest}
\end{equation}
\end{definition}
It follows from the bounds on the operators $a_n$, that the Born series
converges {\em locally}, i.e. when $h$ is sufficiently small:
\begin{equation}\label{eq:smallness}
 \| h \| <  1/\mu.
\end{equation}

This restriction on the size of the perturbation $h$ can be
thought of as the radius of convergence of the expansion about the point
$x$. 

%%%%%%%%%%%%%%%%%%%%%%%%%%%%%%%%%%%%%%%%%%%%%%%%%%%%%%%%%%%%%%%%%%%%%%%%
\subsection{Inverse Born series}
\label{sec:invborn}
The purpose of inverse Born series is to recover $h$ from knowing the
difference in measurements $d(h) = f(x+h)-f(x)$ from a (known) reference
combination of parameters $x$ and measurements $y = f(x)$.  The original
idea in \cite{Markel:2003:IPOD} is to write a power series of the data
$d$,
\begin{equation}
 g(d) = \sum_{n=1}^\infty b_n(d^{\otimes n}),
 \label{eq:iborn}
\end{equation}
involving the operators $b_n \in \cL(Y^{\otimes n},X)$, which are
obtained by  requiring (formally) that $g$ is the inverse of $d(h)$,
i.e. $g(d(h)) = h$. By equating operators $\cL(X^{\otimes n},Y)$ with
the same tensor power $n$, the operators $b_n$ need to satisfy:
\begin{equation}
 \begin{aligned}
 I & = b_1(a_1)\\
 0 & = b_1(a_2) + b_2(a_1 \otimes a_1)\\
 0 & = b_1(a_3) + b_2(a_1 \otimes a_2) + b_2(a_2 \otimes a_1) + 
       b_3(a_1\otimes a_1 \otimes a_1)\\
       &\vdots\\
 0 & = \sum_{m=1}^n \sum_{s_1 + \cdots +s_m = n} b_m(a_{s_1} \otimes
 \cdots \otimes a_{s_m})
 \end{aligned}
\end{equation}
where $I$ is the identity in the parameter space $X$. The requirement
that $b_1 a_1 = I$ is quite strong and may not be possible, for example
when the measurement space $Y$ is finite dimensional and $X$ is infinite
dimensional. Nevertheless if we assume that $b_1$ is both a right and
left inverse of $a_1$ we can express the operators $b_n$ in terms of the
operators $a_n$ and $b_1$:
\begin{equation}\label{eq:iborn:coeff}
 \begin{aligned}
   b_2 &= - b_1 a_2 (b_1 \otimes b_1)\\
   b_3 &= - (b_1 a_3  + b_2(a_1 \otimes a_2) + b_2( a_2 \otimes a_1))
   ( b_1 \otimes b_1 \otimes b_1)\\
       &\vdots\\
   b_n &= - \left( \sum_{m=1}^{n-1} \sum_{s_1 + \cdots +s_m = n} b_m (a_{s_1} \otimes
 \cdots \otimes a_{s_m}) \right) ( b_1^{\otimes n} ).
 \end{aligned}
\end{equation}

Since an inverse of $a_1$ is not necessarily available, the key is to
choose $b_1 \in \cL(Y,X)$ as a regularized pseudoinverse of $a_1$ so
that $b_1 a_1$ is close to the identity, at least in some subspace. This
allows to define the inverse Born series.
\begin{definition}
Assume $f : X \to Y$ admits a Born series (Definition~\ref{def:born})
and let $b_1 \in \cL(Y, X)$. The {\em inverse Born series} for $f$ using
$b_1$ is the power series $g(d)$ given by \eqref{eq:iborn} where the
operators $b_n \in \cL (Y^{\otimes n},X)$ are defined for $n \geq 2$ by
\eqref{eq:iborn:coeff}.  Here again we note the dependence of the
operators $b_n$, $n \geq 2$, on the expansion point $x\in X$ and the
operator $b_1$.
\end{definition}

We now state results that guarantee convergence of the inverse Born
series, and give an error estimate between the limit of the inverse Born
series and the true parameter perturbation $h$. The error estimate
involves $\| ( I - b_1 a_1) h\|$, that is how well the operator $b_1
a_1$ approximates the identity for $h$. These results require that both
$h$ and $d(h) =  f(x+h) - f(x)$ are sufficiently small.

%%%%%%%%%%%%%%%%%%%%%%%%%%%%%%%%%%%%%%%%%%%%%%%%%%%%%%%%%%%%%%%%%%%%%%%%
\subsection{Inverse Born series local convergence} \label{sec:bornconv}
Convergence and stability for the forward and inverse Born series were
established by \citet{Moskow:2008:CSI} for an inverse scattering problem
for diffuse waves (see also section~\ref{sec:dw}). Specifically they
obtained bounds on the operators $a_n$ in \eqref{eq:diffwaveborn}
similar to the bounds \eqref{eq:fwdest}. With these bounds, it is
possible to show convergence and stability of the inverse Born series
and even give a reconstruction error bound \cite{Moskow:2008:CSI}.

The convergence and stability proofs in \cite{Moskow:2008:CSI} for the
diffuse wave problem carry out without major modifications to the
general Banach space setting.  We give in this section a summary of results
analogous to those in \cite{Moskow:2008:CSI}. The proofs are deferred to
\ref{app:bornproof}, as they closely follow the proof pattern
in \cite{Moskow:2008:CSI}.

The following Lemma shows that if the forward Born operators satisfy the
bounds \eqref{eq:fwdest}, the operators $b_n$  are also bounded under
a smallness condition on the linear operator $b_1$ that is used to prime
the inverse Born series.

\begin{lemma} \label{lem:bj}
 Assume $f : X\to Y$  admits a Born series and that 
 \begin{equation}
  \| b_1\| < \frac{1}{(1+\alpha) \mu},
 \end{equation}
 where $\alpha$ and $\mu$ are as in Definition \eqref{def:born}. Then
 the coefficients \eqref{eq:iborn:coeff} of the inverse Born series satisfy the
 estimate
 \begin{equation}
  \| b_n \| \leq   \beta (  (1+\alpha)\mu \| b_1\|  )^n  , ~\text{for $n \geq 2$}
 \label{eq:invest}
 \end{equation}
 where 
 \begin{equation}
  \beta = \| b_1\| \exp\left( \frac{1}{1 - (1+\alpha)\mu \| b_1\|} \right).
 \end{equation}
\end{lemma}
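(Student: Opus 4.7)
The plan is to prove the bound by induction on $n \geq 2$, using the shorthand $\xi = (1+\alpha)\mu\|b_1\|$ so that the target reads $\|b_n\| \leq \beta \xi^n$ and the smallness hypothesis becomes $\xi < 1$. The base case $n=2$ is direct: from $b_2 = -b_1 a_2(b_1 \otimes b_1)$ and submultiplicativity of the operator norm together with the tensor-product norm \eqref{eq:tpnorm}, one gets $\|b_2\| \leq \alpha \mu^2 \|b_1\|^3$, which is comfortably under $\beta \xi^2 = \beta(1+\alpha)^2 \mu^2 \|b_1\|^2$ because $\beta \geq \|b_1\|$ and $(1+\alpha)^2 \geq \alpha$.

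For the inductive step, the first move is to take norms in the recursion \eqref{eq:iborn:coeff}. Submultiplicativity gives
\[
\|b_n\| \leq \|b_1\|^n \sum_{m=1}^{n-1} \sum_{s_1+\cdots+s_m=n} \|b_m\| \prod_{i=1}^m \|a_{s_i}\|.
\]
Inserting the forward estimate $\|a_{s_i}\| \leq \alpha \mu^{s_i}$ factors out $\alpha^m \mu^n$ from each composition, and the standard stars-and-bars count produces $\binom{n-1}{m-1}$ ordered compositions of $n$ into $m$ positive parts. The vector-valued recursion therefore collapses to the scalar inequality
\[
\|b_n\| \leq (\mu\|b_1\|)^n \sum_{m=1}^{n-1} \binom{n-1}{m-1} \alpha^m \|b_m\|.
\]

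From here I would isolate the $m=1$ term, which contributes $\alpha \|b_1\|$, and apply the induction hypothesis $\|b_m\| \leq \beta \xi^m$ to the remaining terms $m = 2, \ldots, n-1$. The resulting sum is handled by the binomial theorem together with $\xi < 1$:
\[
\sum_{m=2}^{n-1} \binom{n-1}{m-1} (\alpha \xi)^m \leq \alpha\xi (1+\alpha\xi)^{n-1} \leq \alpha (1+\alpha)^{n-1}.
\]
After dividing through by $(\mu\|b_1\|)^n = \xi^n/(1+\alpha)^n$, the target bound reduces to $\alpha\|b_1\| \leq \beta(1+\alpha)^{n-1}$, which holds for every $n \geq 2$ because $\beta \geq e\|b_1\|$ and $(1+\alpha)^{n-1} \geq 1+\alpha > \alpha$.

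The only real obstacle is the combinatorial bookkeeping: keeping the compositions count, the tensor-norm product bounds, and the induction base-versus-step split aligned. Once the scalar recurrence above is in place, the remainder is a binomial computation. The exponential factor $\exp(1/(1-\xi))$ in $\beta$ is deliberately generous---it supplies the uniform-in-$n$ slack needed to close the induction without any auxiliary restriction on $\alpha$, and arises naturally from the majorant-series viewpoint in which $\|b_n\|$ is dominated by the Taylor coefficients of a scalar compositional inverse whose radius of convergence is controlled by $1-\xi$.
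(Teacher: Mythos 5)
Your argument is correct, and after the common opening reduction it takes a genuinely different route from the paper. Both proofs begin the same way: take norms in \eqref{eq:iborn:coeff}, insert $\|a_{s_i}\|\leq\alpha\mu^{s_i}$, and count ordered compositions to reach $\|b_n\|\leq(\mu\|b_1\|)^n\sum_{m=1}^{n-1}\binom{n-1}{m-1}\alpha^m\|b_m\|$. At that point the paper decouples the sum into a product $\bigl(\sum_m\|b_m\|\bigr)\bigl(\sum_m\alpha^m\binom{n-1}{m-1}\bigr)$, obtaining $\|b_n\|\leq((1+\alpha)\mu\|b_1\|)^n\sum_{m<n}\|b_m\|$, and then closes via the Moskow--Schotland device of auxiliary constants $C_n=\prod_{m=2}^{n-1}(1+((1+\alpha)\mu\|b_1\|)^m)\leq\exp\bigl(1/(1-(1+\alpha)\mu\|b_1\|)\bigr)$; there the exponential factor in $\beta$ is genuinely consumed. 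You instead run a direct strong induction on the claimed bound itself, keeping the coupling $\alpha^m\xi^m$ inside the binomial sum, and the binomial theorem plus the identity $(1+\alpha)^n-\alpha(1+\alpha)^{n-1}=(1+\alpha)^{n-1}$ reduces the inductive step to $\alpha\|b_1\|\leq\beta(1+\alpha)^{n-1}$, which holds already for any $\beta\geq\|b_1\|$. Your approach buys a sharper result (the exponential factor is pure slack; $\beta=\|b_1\|$ would do), at the cost of slightly more delicate bookkeeping in the step; the paper's cruder decoupling is what forces the accumulating product and hence the $\exp(1/(1-\xi))$ constant, but it transfers verbatim to the error-estimate proof of theorem~\ref{thm:error}, where the same splitting is reused. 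One small inaccuracy in your closing remark: the exponential factor is not ``needed to close the induction'' in your argument --- it is needed only in the paper's product-based argument --- though stating the lemma with the paper's larger $\beta$ of course remains valid.
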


Convergence of the inverse Born series follows from the bounds in
Lemma~\ref{lem:bj} and a smallness condition on the data $d$.
\begin{theorem}[Convergence of inverse Born series]\label{thm:smallness:inv}
 The inverse Born series \eqref{eq:iborn} induced by $b_1$ and associated with
 the forward Born series \eqref{eq:born} converges if
 \begin{equation}
  \| b_1\| < \frac{1}{(1+\alpha) \mu} 
  \label{eq:smallness:inv}
 \end{equation}
 and the data is sufficiently small
 \begin{equation}
 \| d \| < \frac{1}{(1+\alpha)\mu \| b_1\|}.
 \end{equation}
 If $h_*$ is the limit of the series, one can estimate the error due to
 truncating the series by
 \[
   \left \| h_* - \sum_{n=1}^N b_n(d^{\otimes n}) \right \| \leq  \beta  \frac{((1+\alpha)\mu \| b_1 \| \| d \|)^{N+1}}{1 - (1+\alpha)\mu \| b_1 \| \| d \|}.
 \]
\end{theorem}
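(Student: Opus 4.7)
The plan is to deduce convergence and the error bound from Lemma~\ref{lem:bj} together with the geometric series, treating the inverse Born series as an absolutely convergent series in the Banach space $X$.

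First, I observe that the smallness condition $\|b_1\| < 1/((1+\alpha)\mu)$ is exactly the hypothesis of Lemma~\ref{lem:bj}, so for all $n \geq 2$ we have the operator bound $\|b_n\| \leq \beta((1+\alpha)\mu\|b_1\|)^n$. Combining with the tensor-product norm \eqref{eq:tpnorm}, which gives $\|d^{\otimes n}\|_{Y^{\otimes n}} = \|d\|^n$ on the elementary tensor, the operator-norm inequality yields
\[
\|b_n(d^{\otimes n})\|_X \leq \|b_n\|\,\|d\|^n \leq \beta\, r^n, \qquad \text{where } r := (1+\alpha)\mu\|b_1\|\|d\|,
\]
for every $n \geq 2$; the $n=1$ term is handled separately by $\|b_1(d)\| \leq \|b_1\|\|d\|$.

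Next, the second smallness condition $\|d\| < 1/((1+\alpha)\mu\|b_1\|)$ is precisely $r < 1$, so the scalar majorant $\sum_{n \geq 2} \beta r^n$ is a convergent geometric series. This majorizes the tail of $\sum_n b_n(d^{\otimes n})$, giving absolute convergence of the series in $X$. Because $X$ is complete as a Banach space, absolute convergence implies convergence, so the series has a well-defined limit $h_* \in X$.

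Finally, for the truncation error I use the triangle inequality and sum the geometric tail:
\[
\left\|h_* - \sum_{n=1}^N b_n(d^{\otimes n})\right\|_X \leq \sum_{n=N+1}^\infty \|b_n(d^{\otimes n})\|_X \leq \sum_{n=N+1}^\infty \beta r^n = \beta\,\frac{r^{N+1}}{1-r},
\]
which is the stated bound (valid for $N \geq 1$, so that the lemma applies term-by-term in the tail). There is no real obstacle here beyond invoking Lemma~\ref{lem:bj} and the geometric series estimate; the substantive work lies in that lemma's combinatorial bookkeeping from \eqref{eq:iborn:coeff}, which is deferred to \ref{app:bornproof}.
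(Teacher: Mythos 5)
Your proposal is correct and follows essentially the same route as the paper: invoke Lemma~\ref{lem:bj} to majorize the terms $\|b_n(d^{\otimes n})\|$ by the geometric series $\beta\,((1+\alpha)\mu\|b_1\|\|d\|)^n$, conclude absolute (hence, by completeness, norm) convergence, and sum the geometric tail to get the truncation bound. Your explicit handling of the $n=1$ term and the remark that the bound is only needed for the tail $n\geq N+1\geq 2$ is a small tidiness improvement over the paper's write-up, not a different argument.
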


Stability also follows using essentially the same proof as in \cite{Moskow:2008:CSI}.
\begin{theorem}[Stability of inverse Born series]\label{thm:stability}
 Assume $\|b_1\| < ( (1+\alpha)\mu)^{-1}$ and that we have two data $d_1$ and $d_2$ satisfying $M=\max(\|d_1\|,\|d_2\|) <  ((1+\alpha)\mu \| b_1\|)^{-1}$. Let $h_i = g(d_i)$ for $i=1,2$ (i.e. the limit of the inverse Born series). Then the reconstructions are stable with respect to perturbations in the data in the sense that:
 \begin{equation}
  \| h_1 - h_2 \| < C \| d_1 - d_2 \|,
 \end{equation}
 where the constant $C$ depends on $M$, $\alpha$, $\mu$, and $\|b_1\|$.
\end{theorem}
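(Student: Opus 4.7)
The plan is to mimic the structure of the proof of Theorem~\ref{thm:smallness:inv} (convergence), but applied to the difference $h_1 - h_2$. Since each $h_i$ is given as the (convergent) sum $h_i = \sum_{n=1}^\infty b_n(d_i^{\otimes n})$, I would write
\[
  h_1 - h_2 = \sum_{n=1}^\infty b_n\bigl( d_1^{\otimes n} - d_2^{\otimes n}\bigr),
\]
and bound each term separately using the operator norm bounds from Lemma~\ref{lem:bj} together with a telescoping estimate on the tensor power differences.

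The key algebraic step is the telescoping identity
\[
  d_1^{\otimes n} - d_2^{\otimes n} = \sum_{k=1}^{n} d_1^{\otimes (k-1)} \otimes (d_1 - d_2) \otimes d_2^{\otimes (n-k)},
\]
which, combined with the definition \eqref{eq:tpnorm} of the tensor product norm, yields
\[
  \| d_1^{\otimes n} - d_2^{\otimes n} \|_{Y^{\otimes n}} \leq n\, M^{n-1}\, \| d_1 - d_2 \|_Y,
\]
since both $\|d_1\|,\|d_2\| \leq M$. Applying $\|b_n\|$ for $n \geq 2$ from Lemma~\ref{lem:bj} and using $\|b_1\|$ directly for $n=1$, the series for $\|h_1 - h_2\|$ is bounded by
\[
  \| d_1 - d_2 \| \left( \|b_1\| + \beta \sum_{n=2}^\infty n\, ((1+\alpha)\mu\|b_1\|)^n M^{n-1} \right).
\]

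Setting $r = (1+\alpha)\mu\|b_1\| M$, the smallness hypothesis on $M$ gives $r < 1$, so the remaining series is a differentiated geometric series that sums to a finite value depending only on $r$ (and hence on $M$, $\alpha$, $\mu$, $\|b_1\|$). This produces an explicit constant
\[
  C = \|b_1\| + \beta (1+\alpha)\mu \|b_1\| \left( \frac{1}{(1-r)^2} - 1 \right),
\]
completing the estimate. The main obstacle is verifying the telescoping bound with respect to the tensor product norm \eqref{eq:tpnorm}, and being careful that the $n=1$ contribution must be handled separately from the Lemma~\ref{lem:bj} bound (which is only stated for $n \geq 2$); beyond that, the proof is essentially a weighted variant of the geometric summation already used to establish Theorem~\ref{thm:smallness:inv}.
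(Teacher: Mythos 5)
Your proposal is correct and follows essentially the same route as the paper: the telescoping identity for $d_1^{\otimes n}-d_2^{\otimes n}$, the bound $nM^{n-1}\|d_1-d_2\|$ via the tensor norm \eqref{eq:tpnorm}, the coefficient bounds of lemma~\ref{lem:bj}, and summation of the resulting differentiated geometric series. Your separate treatment of the $n=1$ term (using $\|b_1\|$ directly, since the lemma is stated only for $n\geq 2$) is a minor refinement of the paper's argument but does not change the approach.
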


Theorem~\ref{thm:smallness:inv} guarantees
convergence of the forward and inverse Born series:
 \begin{equation}
  d = \sum_{j=1}^\infty a_j(h^{ j})
  \quad \text{and} \quad
  h_* = \sum_{j=1}^\infty b_j ( d^{ j} ).
  \label{eq:bornconv}
 \end{equation}
The limit $h_*$ of the inverse Born series is, in general, different from
the true parameter perturbation $h$.  The following theorem provides an
estimate of the error $\| h - h_* \|$. 

\begin{theorem}[Error estimate]\label{thm:error}
 Assuming that $\| h \| \leq M$, $\| b_1 a_1 h \| \leq M$ with $$M < \frac{1}{(1+\alpha) \mu},$$ and that the hypothesis of theorem~\ref{thm:smallness:inv} hold, i.e.  
 \[
  \| b_1 \| \leq \frac{1}{(1+\alpha) \mu} ~\text{and} ~ \| d \| \leq \frac{1}{(1+\alpha)\mu \| b_1 \| },
 \] 
 we have the following error estimate for the reconstruction error of the inverse Born series:
 \begin{equation} \label{eq:ib:error}
 \left\| h - \sum_{n=1}^\infty b_n (d^{n}) \right\| \leq C \| (I - b_1 a_1) h \|,
 \end{equation}
 where the constant $C$ depends only on $M$, $\alpha$, $\beta$ and $\mu$ and $\|b_1\|$.
\end{theorem}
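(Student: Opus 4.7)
The plan is to substitute the forward Born series $d = \sum_{j\ge 1} a_j(h^{\otimes j})$ into the inverse Born series $h_* = \sum_{n\ge 1} b_n(d^{\otimes n})$, regroup the result by the total power $N$ of $h$, and then exploit the defining recursion \eqref{eq:iborn:coeff} to isolate exactly the error coming from the failure of $b_1 a_1$ to be the identity.

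After the substitution and rearrangement by total degree, one obtains
\[
 h_* = \sum_{N=1}^\infty T_N(h^{\otimes N}), \qquad
 T_N := \sum_{n=1}^N \sum_{s_1+\cdots+s_n = N} b_n(a_{s_1}\otimes\cdots\otimes a_{s_n}).
\]
Setting $P := b_1 a_1$, the $N=1$ contribution is $T_1 = P$. For $N \ge 2$ I would split off the $n=N$ summand (which equals $b_N(a_1^{\otimes N})$) and apply the recursion $b_N = -S_N \cdot b_1^{\otimes N}$, where
\[
 S_N := \sum_{m=1}^{N-1} \sum_{s_1+\cdots+s_m=N} b_m(a_{s_1}\otimes\cdots\otimes a_{s_m}).
\]
Since $b_1^{\otimes N} \circ a_1^{\otimes N} = P^{\otimes N}$, the inner sum collapses to $T_N = S_N(I - P^{\otimes N})$, which vanishes when $P = I$, as it should (this is exactly the cancellation built into the definition of $b_n$). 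Consequently,
\[
 h - h_* = (I - P)h - \sum_{N=2}^\infty S_N\bigl( h^{\otimes N} - (Ph)^{\otimes N} \bigr).
\]

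To surface the factor $\|(I - b_1 a_1)h\|$ appearing on the right-hand side of \eqref{eq:ib:error}, I would use the telescoping identity
\[
 h^{\otimes N} - (Ph)^{\otimes N} = \sum_{k=1}^N h^{\otimes (k-1)} \otimes (I - P)h \otimes (Ph)^{\otimes(N-k)},
\]
so that the hypotheses $\|h\|,\|Ph\| \le M$ together with \eqref{eq:tpnorm} yield $\| h^{\otimes N} - (Ph)^{\otimes N} \| \le N M^{N-1} \| (I-P)h \|$.

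The main obstacle is obtaining a bound on $\|S_N\|$ sharp enough that the series $\sum_{N\ge 2} N M^{N-1} \|S_N\|$ converges. Using $\|a_j\| \le \alpha \mu^j$, the bound $\|b_m\| \le \beta ((1+\alpha)\mu\|b_1\|)^m$ from Lemma~\ref{lem:bj}, and the elementary count that the number of compositions of $N$ into $m$ positive parts equals $\binom{N-1}{m-1}$, a binomial sum of the same flavor as in the proof of Lemma~\ref{lem:bj} gives an estimate of the form $\|S_N\| \le C_1 [\mu(1+r\alpha)]^N$ with $r := (1+\alpha)\mu\|b_1\| \le 1$. The standing assumptions $\|b_1\| \le 1/((1+\alpha)\mu)$ and $M < 1/((1+\alpha)\mu)$ then imply $M\mu(1+r\alpha) \le M\mu(1+\alpha) < 1$, so the tail series is dominated by a convergent geometric series, and \eqref{eq:ib:error} follows with a constant $C$ depending only on $M$, $\alpha$, $\beta$, $\mu$ and $\|b_1\|$.
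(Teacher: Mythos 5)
Your proposal follows the paper's own proof essentially step for step: substitute the forward series into the inverse series, regroup by total degree in $h$, use the recursion \eqref{eq:iborn:coeff} to collapse the degree-$N$ coefficient to $S_N(I-(b_1a_1)^{\otimes N})$, apply the telescoping bound $\|h^{\otimes N}-(b_1a_1h)^{\otimes N}\|\le N M^{N-1}\|(I-b_1a_1)h\|$, and control $\|S_N\|$ via Lemma~\ref{lem:bj}, the bound $\|a_j\|\le\alpha\mu^j$, and the composition count $\binom{N-1}{m-1}$, exactly as in \ref{app:bornproof}. The argument is correct (your bound $\|S_N\|\lesssim[\mu(1+r\alpha)]^N$ is in fact slightly sharper than the paper's factored estimate), so there is nothing further to add.
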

The proofs of lemma~\ref{lem:bj}, theorems~\ref{thm:smallness:inv},
\ref{thm:stability}, and \ref{thm:error} can be found in \ref{app:bornproof}.

%%%%%%%%%%%%%%%%%%%%%%%%%%%%%%%%%%%%%%%%%%%%%%%%%%%%%%%%%%%%%%%%%%%%%%%%
% Examples of fwd and inv Born series
% !TEX root = ItInvBornSchroSparse.tex
%%%%%%%%%%%%%%%%%%%%%%%%%%%%%%%%%%%%%%%%%%%%%%%%%%%%%%%%%%%%%%%%%%%%%%%%
\section{Examples of forward and inverse Born series}
\label{sec:ibex}

We write examples of forward and inverse Born series in the framework of
section~\ref{sec:born}. We start by showing in section~\ref{sec:taylor}
that forward and inverse Born series are intimately related to Taylor
series. Another example is that of Neumann series
(section~\ref{sec:neumann}). We also include the forward and inverse
Born series from \cite{Moskow:2008:CSI,Arridge:2012:IBS}, namely those
for the diffuse waves for optical tomography (section~\ref{sec:dw}) and
the electrical impedance tomography problem (section~\ref{sec:eit}). We
finish the examples with the discrete internal measurements
Schr\"odinger problem (section~\ref{sec:qsparse}), which is the main
application of inverse Born series that we are concerned with here.

%%%%%%%%%%%%%%%%%%%%%%%%%%%%%%%%%%%%%%%%%%%%%%%%%%%%%%%%%%%%%%%%%%%%%%%%
\subsection{Taylor series}
\label{sec:taylor}
\begin{itemize}
 \item[] {\bf Parameter space:} $X=$ Banach space
 \item[] {\bf Measurement space:} $Y=X$ (for simplicity)
 \item[] {\bf Forward map:} $f$ analytic (see e.g.
 \cite{Whittlesey:1965:AFB})
 \item[] {\bf Forward Born series coefficients:} About $x\in X$, the
 coefficients $a_n$ can be any operators in $\cL(X^{\otimes n},X)$
 agreeing with $f^{(n)}(x)/n!$ on the diagonal i.e. for any $h \in X$,
 \[
  a_n(h^{\otimes n}) = \frac{1}{n!}f^{(n)}(x)(h^{\otimes n}).
 \]
 Here $f^{(n)}$ is the $n-$th Fr\'echet derivative of $f$, see e.g. \cite[\S
 4.5]{Zeidler:1985:NFA} for a definition.
\end{itemize}

Here we use the theory of analytic functions between Banach spaces (see e.g.
\cite{Whittlesey:1965:AFB}) which assumes that the function $f$ is
$C^\infty$ and that the Taylor series of the function
\begin{equation}
 f(x+h) = \sum_{n=0}^\infty \frac{1}{n!}f^{(n)}(x) ( h^{\otimes n})
\end{equation}
converges absolutely and uniformly for $h$ small enough.  If in
addition we assume that $f$ admits a Born series expansion at $x$, then
we have
\[
d(h) = f(x+h) - f(x) = \sum_{n=1}^\infty
\frac{1}{n!}f^{(n)}(x)(h^{\otimes n}) = \sum_{n=1}^\infty a_n (h^{\otimes
n}).
\]
That is the Taylor series and Born series coefficients,
$f^{(n)}(x)/n!$ and $a_n$ respectively, agree at the diagonal
$h^{\otimes n}$. 

Since $f$ is $C^\infty$, the Fr\'echet derivatives $f^{(n)}$ are
symmetric in the sense that for any permutation $\pi$ of
$\{1,\ldots,n\}$ we have that
\[
f^{(n)}(h_1\otimes\cdots\otimes h_n) =
f^{(n)}(h_{\pi(1)}\otimes\cdots\otimes h_{\pi(n)}).
\]
The Born series coefficients $a_n$ in general {\em do not} satisfy this
property, however we can consider their symmetrization
$\ta_n:X^{\otimes n}\to Y$ defined by
\begin{equation}
\ta_n(h_1\otimes\cdots\otimes h_n) = \frac{1}{n!}\sum_{\pi}
a_n(h_{\pi(1)}\otimes\cdots\otimes h_{\pi(n)})
\label{eq:symm}
\end{equation}
where the summation is taken over all permutations $\pi$ of
$\{1,\ldots,n\}$. 

Clearly we have that
\[
\ta_n(h^{\otimes n}) = \frac{1}{n!}\sum_{\pi} a_n(h^{\otimes n}) =
a_n(h^{\otimes n}),
\]
and so we have the following equality:
\[
d(h) = f(x+h) - f(x) = \sum_{n=1}^{\infty}
\frac{1}{n!}f^{(n)}(x)(h^{\otimes n}) = \sum_{n=1}^\infty \ta_n(h^{\otimes
n}).
\]
We then have two analytic functions that are equal for $h$ sufficiently
small, therefore the symmetric operators $\frac{1}{n!}f^{(n)}(x)$ and $\ta_n$
must be identical (see \cite{Whittlesey:1965:AFB}). Therefore the Born
series and Taylor series coefficients are essentially the same, up to a
symmetrization.

If $a_1=f^{(1)}(x)$ is invertible (this is where the assumption $X=Y$ is
used), we can apply the implicit function theorem (see e.g.
\cite{Whittlesey:1965:AFB} or \cite[\S 4.6]{Zeidler:1985:NFA}) to
guarantee the existence of $f^{-1}$ in a neighborhood of $x$. Moreover
the inverse is analytic \cite{Whittlesey:1965:AFB} in a neighborhood of
$y = f(x)$ and admits a Taylor series near $y$ 
\begin{equation}
\label{eq:taylorinvf} f^{-1}(y+d) = \sum_{n=0}^\infty \frac{1}{n!}
(f^{-1})^{(n)}(y) (d^{\otimes n}).
\end{equation}

On the other hand, if $b_1 = a_1^{-1}$ we can define an inverse Born
series for $f$ as in \eqref{eq:iborn}.  By the error estimate for the
inverse Born series (Theorem~\ref{thm:error}) we can guarantee that $h =
g(d(h)) = g(f(x+h) - f(x))$ for $h$ and $d(h)$ sufficiently small. Since
$f$ is invertible in a neighborhood of $y$ we can also write $g$ in
terms $f^{-1}$
\[
 g(d) = f^{-1}(y+d) - f^{-1}(y) = f^{-1}(y+d) - x.
\]
Using the Taylor series \eqref{eq:taylorinvf} for $f^{-1}$  we can write
\begin{equation} \label{eq:itaylor}
 g(d) = \sum_{n=1}^\infty b_n(d^{\otimes n}) = \sum_{n=1}^\infty \frac{1}{n!} (f^{-1})^{(n)}(y)
 (d^{\otimes n}).
\end{equation}
As is the case for the forward Born operators $a_n$, the inverse Born
operators $b_n$ are in general not symmetric. If we consider their
symmetrization $\tb_n$ (as in \eqref{eq:symm}), then we find that the
symmetric operators $\tb_n$ and $\frac{1}{n!}(f^{-1})^{(n)}(y)$ are the
same.  Therefore inverse Born series is a way of calculating (up to a
symmetrization) the Taylor series for $f^{-1}$ from the Taylor series
for $f$.

%%%%%%%%%%%%%%%%%%%%%%%%%%%%%%%%%%%%%%%%%%%%%%%%%%%%%%%%%%%%%%%%%%%%%%%%
\subsection{Neumann series}
\label{sec:neumann}
\begin{itemize}
 \item[] {\bf Parameter space:} $X = \real^N$
 \item[] {\bf Measurement space:} $Y = \real^{n\times n}$
 \item[] {\bf Forward map:} $f(\bx) = \bM^T ( \bL - \diag(\bx))^{-1} \bM$, where $\bL \in \real^{N\times N}$ is invertible and $\bM \in \real^{N\times n}$.
 \item[] {\bf Forward Born series coefficients:} About $\bzero$, the coefficients are $a_n(\bh) = \bM^T (\bL^{-1} \diag(\bh))^n \bL^{-1} \bM$.
\end{itemize}

The forward Born series in this is example comes from the Neumann series
for the inverse of $\bL - \diag(\bh)$, when it exists. Indeed if for
some matrix induced norm $\| \bL^{-1} \diag(\bh) \| < 1$, this inverse
exists and is given by the Neumann series 
\begin{equation}
 (\bL - \diag(\bh))^{-1} = \left( \sum_{n=0}^\infty (\bL^{-1} \diag(\bh))^n\right) \bL^{-1}.
\end{equation}
The forward Born series is then
\begin{equation}
 \begin{aligned}
 f(\bh) - f(\bzero) & =\bM^T ( \bL - \diag(\bh))^{-1} \bM  - \bM^T \bL^{-1} \bM\\
 &= \sum_{n=1}^\infty  \bM^T (\bL^{-1} \diag(\bh))^n \bL^{-1} \bM.
 \end{aligned}
\end{equation}

The inverse Born series can be defined by using as $b_1$ a regularized
pseudoinverse of the linear map $a_1(\bh) = \bM^T \bL^{-1} \diag(\bh)
\bL^{-1} \bM$. By the convergence results of section~\ref{sec:bornconv},
the inverse Born series converges under smallness conditions for $\bh$,
$f(\bh) - f(\bzero)$ and $b_1$.

This problem is motivated by a discretization of the Schr\"odinger
equation $\Delta u - q u = \phi$ with finite differences. The matrix
$\bL$ is the finite difference discretization of the Laplacian and $\bh$
is the Schr\"odinger potential at the discretization nodes. The matrix
$\bM$ corresponds to different source terms $\phi$, which are also used
to measure $u$ (collocated sources and receiver setup as the one we use
for the Schr\"odinger problem with discrete internal measurements in
section~\ref{sec:qsparse}). This example can be easily modified when the
discretization of the $q u$ term in the Schr\"odinger equation is not a
diagonal matrix (as is often the case for finite elements). The
collocated sources and receivers setup can be changed as well by using a
matrix other than $\bM^T$ in the definition of $f(\bx)$.

%%%%%%%%%%%%%%%%%%%%%%%%%%%%%%%%%%%%%%%%%%%%%%%%%%%%%%%%%%%%%%%%%%%%%%%%
\subsection{Optical tomography with diffuse waves model \cite{Moskow:2008:CSI}}
\label{sec:dw}
In the diffuse waves approximation for optical tomography (see e.g.
\cite{Arridge:1999:OTM} for a review), the energy density $G_q(\bx,\by)$
resulting from a point source $\by \in \Omega$ satisfies a Schr\"odinger type
equation: 
\begin{equation}\label{eq:optommeas}
  \left\{
  \begin{aligned}
  -\Delta_\bx  G_q(\bx,\by) + q(\bx) G_q(\bx,\by) &= -\delta(\bx-\by), &&\text{for $\bx \in \Omega$,}\\
  G_q(\bx,\by) + \ell \bn(\bx) \cdot \nabla_\bx G_q(\bx,\by) &= 0, &&\text{for $\bx \in \partial \Omega$},
  \end{aligned}
  \right.
 \end{equation}
where the domain $\Omega \subset \real^d$, $d\geq 2$ has a smooth boundary
$\partial \Omega$, and $q(\bx) \geq 0$ is the absorption coefficient. The
$\ell \geq 0$ in the Robin boundary condition is given and, as usual,
$\bn(\bx)$ denotes the unit outward pointing normal vector to
$\partial\Omega$ at $\bx$. The inverse problem here is to recover the
absorption coefficient $q(\bx)$ from knowledge of $G_q(\bx,\by)$ on
$\partial\Omega \times \partial \Omega$. This data amounts to taking
measurements of the energy density at all $\bx \in \partial\Omega$ for
all source locations $\by \in \partial\Omega$ or to knowing the
Robin-to-Dirichlet map for $q$. If the difference between the
absorption coefficient $q(\bx)$ and a known reference coefficient
$q_0(\bx)$ is supported in some $\Omegat \subset \Omega$ (with $\partial
\Omega$ and $\partial \Omegat$ separated by a finite distance), then
$G_q$ satisfies the Lippmann-Schwinger type integral equation:
\begin{equation}
 G_q (\bx, \by)  = G_{q_0} (\bx,\by) + \int_{\Omegat} d\bz\; G_{q_0}(\bx,\bz)
 (q(\bz) - q_0(\bz)) G_q(\bz,\by).
 \label{eq:lippschwin}
\end{equation}

\citet{Moskow:2008:CSI} show that the forward Born or scattering series
for this problem can be defined as follows.

\begin{itemize}
 \item {\bf Parameter space:} $X=L^p(\Omegat)$ for $2\leq p \leq \infty$.
 \item {\bf Measurement space:} $Y=L^p (\partial \Omega \times \partial \Omega)$
 \item {\bf Forward map:} $f: q \to G_q(\bx,\by)|_{\partial \Omega
 \times \partial \Omega}$.
 \item {\bf Forward Born series coefficients:} For $\eta_1,\ldots,\eta_n \in
 L^p(\Omegat)$ and $\bx_1,\bx_2 \in \partial \Omega$, the coefficient
 for the Born series expansion about $q=q_0$ is
 \begin{multline} \label{eq:diffwaveborn}
  (a_n (\eta_1 \otimes \cdots \otimes \eta_n))(\bx_1,\bx_2) =\\ 
  \int_{\Omegat^n} G_{q_0}(\bx_1,\by_1) G_{q_0}(\by_1,\by_2) \ldots
  G_{q_0}(\by_{n-1},\by_n) G_{q_0}(\by_n,\bx_2)\\ \eta_1(\by_1) \ldots
  \eta_n(\by_n) ~ d\by_1 \ldots d\by_n.
 \end{multline}
\end{itemize}

In particular, the results of \citet{Moskow:2008:CSI} guarantee that the
operators $a_n$ satisfy the bounds \eqref{eq:fwdest} assuming $q_0$ is
constant and that $q$ is sufficiently close to $q_0$. Therefore one can
define an inverse Born series through the procedure
\eqref{eq:iborn:coeff}, and this series converges under appropriate
conditions (see \cite{Moskow:2008:CSI} and section~\ref{sec:bornconv}).

%%%%%%%%%%%%%%%%%%%%%%%%%%%%%%%%%%%%%%%%%%%%%%%%%%%%%%%%%%%%%%%%%%%%%%%%
\subsection{The Calder\'on or electrical impedance tomography problem
\cite{Arridge:2012:IBS}}\label{sec:eit} The electric potential inside a
domain $\Omega$ with positive conductivity $\sigma(\bx) \in
L^\infty(\Omega)$ resulting from a point source located at $\by \in
\Omega$ satisfies the equation
\begin{equation}\label{eq:condeq}
\left\{
\begin{aligned}
 \nabla_\bx \cdot [ \sigma(\bx)\nabla_\bx G_\sigma(\bx,\by) ] &= -\delta(\bx-\by),
 &&\text{for $\bx\in\Omega$}\\
 G_\sigma(\bx,\by) + z \sigma\bn(\bx)\cdot\nabla_\bx G_\sigma(\bx,\by) &=
 0,&&\text{for $\bx\in\partial\Omega$.}
\end{aligned}
\right.
\end{equation}
Here we assume the contact impedance $z\geq0$ is known and that $\sigma$
is constant on $\partial \Omega$. The domain $\Omega$ is also assumed to
be in $\real^d$, $d\geq 2$ and with smooth boundary. The electric
impedance tomography (EIT) problem consists in recovering the
conductivity $\sigma$ from the Robin-to-Dirichlet map, i.e. from
knowledge of $G_\sigma(\bx,\by)$ on $\partial\Omega \times
\partial\Omega$ (see e.g. \cite{Borcea:2002:EIT} for a review of EIT).
If the difference between $\sigma$ and a known reference conductivity
$\sigma_0$ is supported in $\Omegat \subset \Omega$ (with $\partial
\Omegat$ at a finite distance from $\partial \Omega$), $G_\sigma$
satisfies the integral equation
\begin{equation}
 G_\sigma(\bx,\by) = G_{\sigma_0}(\bx,\by) + \int_{\Omegat} d\bz\;
 G_{\sigma_0}(\bx,\bz) \nabla_\bz
 \cdot[(\sigma(\bz)-\sigma_0(\bz))\nabla_\bz G_\sigma(\bz,\by) ].
\end{equation}
Integrating by parts and using that $\sigma = \sigma_0$ on $\partial
\Omega$, $G_\sigma$ obeys a Lippmann-Schwinger type equation:
\begin{equation}
 G_\sigma(\bx,\by) = G_{\sigma_0}(\bx,\by) - \int_{\Omegat} d\bz\;
 (\sigma(\bx)-\sigma_0(\bx))
 \nabla_\bz G_{\sigma_0}(\bx,\bz) 
 \cdot \nabla_\bz G_\sigma(\bz,\by).
\end{equation}
As shown by \citet{Arridge:2012:IBS}, one can then define a forward Born
series that can be summarized as follows.
\begin{itemize}
\item{\bf Parameter Space:} $X=L^\infty(\Omegat)$.
\item{\bf Measurement space:}
$Y=L^\infty(\partial\Omega\times\partial\Omega)$.
\item{\bf Forward map:} $f : \sigma \to G_\sigma(\bx,\by)|_{\partial
\Omega \times \partial \Omega}$.
\item{\bf Forward Born series coefficients:} For $\eta_1,\ldots,\eta_n\in
L^\infty(\Omegat)$ and $\bx_1,\bx_2 \in \partial \Omega$, the
coefficient for the Born series expansion about $\sigma=\sigma_0$
is
\begin{multline}\label{eq:caldbornop}
a_n (\eta_1 \otimes\cdots\otimes \eta_n)(\bx_1,\bx_2) =\\
(-1)^n\int_{\Omegat} d\by_1\; \eta_1(\by_1)\nabla_{\by_1}G_{\sigma_0}(\by_1,\bx_1)\cdot
\nabla_{\by_1}\int_{\Omegat} d\by_2\;\eta_2(\by_2)\nabla_{\by_2}G_{\sigma_0}(\by_2,\by_1)\cdot\\
	\cdots\nabla_{\by_{n-1}}\int_{\Omegat} d\by_n\;
	\eta_n(\by_n)\nabla_{\by_n}G_{\sigma_0}(\by_n,\by_{n-1})\cdot\nabla_{\by_n}G_{\sigma_0}(\by_n,\bx_2).
\end{multline}
\end{itemize}

\citet{Arridge:2012:IBS} show that for $\sigma_0$ constant, the
operators $a_n$ satisfy the bounds
\eqref{eq:fwdest} and so an inverse Born series can be defined following
the procedure \eqref{eq:iborn:coeff}. The convergence of this series is
established in \cite{Arridge:2012:IBS} and can also be shown using the
generalization in section~\ref{sec:bornconv}.

%%%%%%%%%%%%%%%%%%%%%%%%%%%%%%%%%%%%%%%%%%%%%%%%%%%%%%%%%%%%%%%%%%%%%%%%
\subsection{The Schr\"odinger problem with discrete internal measurements}
\label{sec:qsparse}

Instead of having infinitely many measurements as in the optical
tomography inverse Schr\"odinger problem (outlined in
section~\ref{sec:dw}), we consider here the case where we only have
access to {\em finitely many} internal measurements $D_{i,j}$ (see
equation \eqref{eq:sparsedata}) of the fields $u_i$, $i=1,\ldots,N$,
satisfying \eqref{eq:SchroEq}. We also allow the Schr\"odinger potential
in \eqref{eq:SchroEq} to be complex (as discussed in
section~\ref{sec:ht}, this is useful when solving the
transient hydraulic tomography problem).

The Green function $G_q(\bx,\by)$ for the problem \eqref{eq:SchroEq}
satisfies \eqref{eq:optommeas} with homogeneous Dirichlet boundary
conditions (instead of homogeneous Robin boundary conditions).  The
fields $u_i$ can be expressed in terms of the Green function $G_q$ as
\begin{equation}
 u_i(\bx) = - \int_\Omega d\by \; G_q(\bx,\by)
 \phi_i(\by),~i=1,\ldots,N.
 \label{eq:repform}
\end{equation}
If the difference between the Schr\"odinger potential $q(\bx)$ and known
reference $q_0(\bx)$ is supported in $\Omegat \subset \Omega$ (with
$\partial \Omegat$ and $\partial \Omega$ separated by a finite
distance), $G_q$ and $G_{q_0}$ are still related by the
Lippmann-Schwinger type equation \eqref{eq:lippschwin}. By a fixed point
procedure we can define a forward Born series as follows.

\begin{itemize}
\item{\bf Parameter Space:} $X=L^\infty(\Omegat)$.
\item{\bf Measurement Space:} $Y = \complex^{N \times N}$, with norm $\|
\bA \| = \max_{i,j=1,\ldots N} |A_{i,j}|$.
\item{\bf Forward map:} Owing to \eqref{eq:repform},  the data $\bD$ in
\eqref{eq:sparsedata} becomes:
 \[
  f : q \to \bD = -\left[ \int_{\Omega^2} d\bx d\by \; \phi_i(\by) \phi_j(\bx)
  G_q(\bx,\by) \right]_{i,j=1\ldots N}.
 \]
 \item{\bf Forward Born series coefficients: } For $\eta_1,\ldots,\eta_n\in
L^\infty(\Omegat)$ the coefficient for the Born series expansion about
$q_0$ is 
\begin{multline}\label{eq:bornsparse}
[a_n(\eta_1\otimes\cdots\otimes \eta_n)]_{i,j} =\\ (-1)^n\int_{\Omegat^{n+2}}G_{q_0}(\bx,\by_1)G_{q_0}(\by_1,\by_2)\cdots
G_{q_0}(\by_{n-1},\by_n)G_{q_0}(\by_n,\bz)\cdot\\
	\eta_1(\by_1)\cdots \eta_n(\by_n) \phi_i(\bz)\phi_j(\bx) \; 
	d\bz d\by_1\cdots d\by_nd\bx,
\end{multline}
for $i,j=1,\ldots,N$. Note that we have have assumed $\supp \phi_i
\subset \Omegat$ so that instead of integrating over $\Omegat^n \times
\Omega^2$  integrate over $\Omegat^{n+2}$.
\end{itemize}

We show in section~\ref{sec:sparsebounds} that the operators $a_n$
satisfy the bounds \eqref{eq:fwdest} (with $q_0$ not necessarily
constant), so it is possible to show convergence of the corresponding
inverse Born series by the results of section~\ref{sec:bornconv}.

%%%%%%%%%%%%%%%%%%%%%%%%%%%%%%%%%%%%%%%%%%%%%%%%%%%%%%%%%%%%%%%%%%%%%%%%
% Restarted inverse Born series
% !TEX root = ItInvBornSchroSparse.tex
%%%%%%%%%%%%%%%%%%%%%%%%%%%%%%%%%%%%%%%%%%%%%%%%%%%%%%%%%%%%%%%%%%%%%%%%
\section{Inverse Born series and iterative methods}\label{sec:iterated}

The main goal of this section is to show that inverse Born series can be
used to design superlinear\footnote{We recall that superlinear
convergence of $x_n$ to $x_*$ means that $\| x_{n+1} - x_* \| \leq \epsilon_n \| x_n - x_*
\|$, where $\epsilon_n \to 0$ as $n \to \infty$.}  iterative methods
converging to an approximation $x_*$ of the true parameter $x_{\ttrue}$ from
knowing measurements $y_{\tmeas} = f(x_{\ttrue})$ and the forward map $f: X \to Y$. The
iterative methods we study here are of the form
\[
 \left\{
 \begin{aligned}
 x_0 &= \text{given},\\
 x_{n+1} &= T_n(x_n),~\text{for $n\geq 0$,}
 \end{aligned}
 \right.
\]
where $T_n : X \to X$. Of course, for such an iterative method to be useful, the
iterates $x_n$ need to converge to $x_*$ as $n \to \infty$ (with an a priori
rate of convergence) and one should be able to estimate the error $\| x_{\ttrue}
- x_* \|$ between the desired parameter $x_{\ttrue}$ and the limit $x_*$.

%%%%%%%%%%%%%%%%%%%%%%%%%%%%%%%%%%%%%%%%%%%%%%%%%%%%%%%%%%%%%%%%%%%%%%%%
\subsection{Inverse Born series as an iterative method}
We start by reformulating the results of section~\ref{sec:bornconv}
in the context of iterative methods. Let us assume that we have a good
guess $x_0$ for $x_{\ttrue}$, and that we know the forward Born
series about $x_0$, i.e. we know the coefficients $a_j[x_0] \in
\cL(X^{\otimes j}, Y)$ so that
\[
 f(x) - f(x_0) = \sum_{j=1}^\infty a_j[x_0] (x-x_0)^{\otimes j}.
\]
Theorem~\ref{thm:smallness:inv} means that for an appropriate choice of
$b_1[x_0]$, if $\|x_0-x_{\ttrue}\|$ and
$\|f(x_0)-y_{\tmeas}\|$ are sufficiently small then the inverse
Born series
\begin{equation}
 x_n - x_0 = \sum_{j=1}^n b_j[x_0] (y_{\tmeas}-f(x_0))^{\otimes j},
 \label{eq:iborn2}
\end{equation}
converges {\em linearly}\footnote{We recall that linear convergence rate
of $x_n$ to $x_*$ means that there is some $0<C<1$ such that $\|x_{n+1}
- x_*\| \leq C \| x_n - x^*\|$.} to some $x_* \in X$ as $n\to \infty$. Here we
write explicitly the dependence of the inverse Born operators $b_n[x_0]$
(defined recursively as in \eqref{eq:iborn:coeff}) on the reference
parameter $x_0$. Notice that the inverse Born series \eqref{eq:iborn2}
can be written as the iterative method,
\begin{equation}
 \left\{
 \begin{aligned}
 x_0 &= \text{given},\\
 x_{n+1} &= x_n + b_{n+1}[x_0] (y_{\tmeas}-f(x_0))^{\otimes n+1},~
 \text{for $n\geq 0$}.
 \end{aligned}
 \right.
 \label{eq:it:ibs}
\end{equation}
The error estimate of theorem~\ref{thm:error}
quantifies how close the limit $x_*$ of the iterative method
\eqref{eq:it:ibs} is to the true parameter $x_{\ttrue}$, i.e. there is some
$C>0$ such that 
\begin{equation}
 \| x_* - x_{\ttrue} \| \leq C \| (I - b_1[x_0]a_1[x_0]) (x_0 -
 x_{\ttrue})\|.
 \label{eq:errest}
\end{equation}

Unfortunately this is an expensive method to implement as the
computational cost of each term $b_n[x_0]$ in the inverse Born series (see
\eqref{eq:iborn:coeff}) increases exponentially with $n$. Indeed if
applying the forward Born operator $a_n[x_0]$ requires $n$ forward problem
solves (as is the case for the Schr\"odinger problem), an application of
the inverse Born operator $b_n[x_0]$ involves $2^{n-1}-1$ forward problem
solves. 

%%%%%%%%%%%%%%%%%%%%%%%%%%%%%%%%%%%%%%%%%%%%%%%%%%%%%%%%%%%%%%%%%%%%%%%%
\subsection{Restarted inverse Born series (RIBS)}
A natural idea to reduce the cost of inverse Born series is to use
the $k-$th iterate of the inverse Born series \eqref{eq:it:ibs} as the
starting guess for a fresh run of inverse Born series. This gives rise
to the following class of iterative methods:
\begin{equation}
 \left\{
 \begin{aligned}
 x_0 &= \text{given},\\
 x_{n+1} &= x_n + \sum_{j=1}^k b_j[x_n] (y_{\tmeas} - f(x_n))^{\otimes j},
 ~\text{for $n\geq 0$},
 \end{aligned}
 \right.
 \label{eq:ribs}
\end{equation}
which we denote by RIBS($k$).

If $f$ is a differentiable mapping and we choose $b_1[x_n] =
(f'(x_n))^\dagger$ (where the sign $\dagger$ stands for a regularized
pseudoinverse of $f'(x_n)$),
the RIBS(1) method is in fact the Gauss-Newton method:
\begin{equation}
 \left\{
 \begin{aligned}
 x_0 &= \text{given},\\
 x_{n+1} &= x_n + f'(x_n)^\dagger (y_{\tmeas} - f(x_n)),
 ~\text{for $n\geq 0$},
 \end{aligned}
 \right.
 \label{eq:ribs1}
\end{equation}
and is quadratically convergent in a neighborhood of $x_{\ttrue}$ under
fairly mild conditions on $f$ (for $X$ and $Y$ finite dimensional, see
e.g.  \cite{Deuflhard:2011:NMNP}).

If in addition to choosing $b_1[x_n] = (f'(x_n))^\dagger$ we have
$a_2[x_n] = f''(x_n)/2$, the RIBS(2) method can be written as
\begin{equation}
 \left\{
 \begin{aligned}
 x_0 &= \text{given},\\
 x_{n+1} &= x_n - f'(x_n)^\dagger \left[r_n - \frac{1}{2}f''(x_n) ( f'(x_n)^\dagger
 r_n, f'(x_n)^\dagger r_n) \right] ,
 ~\text{for $n\geq 0$},
 \end{aligned}
 \right.
 \label{eq:ribs2}
\end{equation}
where $r_n \equiv y_\tmeas - f(x_n)$.  This is the so called Chebyshev-Halley method,
which has been studied before by \citet*{Hettlich:2000:SMIP} in the
context of inverse problems. This method is guaranteed to converge
cubically when $f''$ is Lipschitz continuous \cite{Hettlich:2000:SMIP}.

\begin{remark} 
Although the inverse Born series, and the Gauss-Newton and
Chebyshev-Halley methods are guaranteed to converge (under appropriate
assumptions), the limits may be different. The only case where we know that
these methods converge to the same $x_* = x_{\ttrue}$ is when $X=Y$,
the mapping $f$ is invertible in a neighborhood of $x_{\ttrue}$ and the
initial iterate $x_0$ is sufficiently close to $x_{\ttrue}$.
\end{remark}

%%%%%%%%%%%%%%%%%%%%%%%%%%%%%%%%%%%%%%%%%%%%%%%%%%%%%%%%%%%%%%%%%%%%%%%%
\subsection{Numerical experiments on a Neumann series toy problem}
Here we compare the performance of inverse Born series, Gauss-Newton and
Chebyshev-Halley on the Neumann series problem discussed in
section~\ref{sec:neumann}. We used for discrete Laplacian $\bL$ the matrix
\[
 \bL = \begin{bmatrix} 
   -3 & 1\\
   1 &-3 & 1\\
   &\cdots & \\
    & 1 & -3 & 1\\
    & & 1 & -3 \end{bmatrix}
   \in \real^{256 \times 256}.
\]
The true parameter is a vector with zero mean, independent, normal distributed entries
and standard deviation $0.1$. The measurement operator $\bM$ is a $256
\times 8$ matrix with zero mean, independent, normal distributed entries
and standard deviation $1$. For the inverse Born series, $b_1$ is a
pseudoinverse of the Jacobian of the forward problem, where the singular
values smaller than $10^{-6}$ times the largest singular value (of the
Jacobian) are treated as zeroes. The same pseudoinverse is applied to the Jacobian
matrices involved in the Gauss-Newton and Chebyshev-Halley methods. The initial
guess for all the methods is $\bx_0 = \bzero$. For each method we
display in figure~\ref{fig:neumann} (a) the quantity $\|\bx_n
- \bx_*\|$. Since we do not have access to the limiting iterate, we
simply took one more step of each method and used it instead of $\bx_*$.
The residual terms $\|f(\bx_n)-f(\bx_{\ttrue})\|$ are shown in
figure~\ref{fig:neumann} (b). As expected, we see linear
convergence for the iterates and the residuals from the truncated
inverse Born series method. Also the first Gauss-Newton (resp.
Chebyshev-Halley) iterate error and residual matches that of the first
(resp. second) inverse Born series iterate. The Gauss-Newton method has
the expected quadratic convergence of the error, while the
Chebyshev-Halley exhibits super-quadratic convergence of the error.

\begin{figure}[!hbpt]
\centering
\begin{tabular}{cc}
\includegraphics[width=0.49\textwidth]{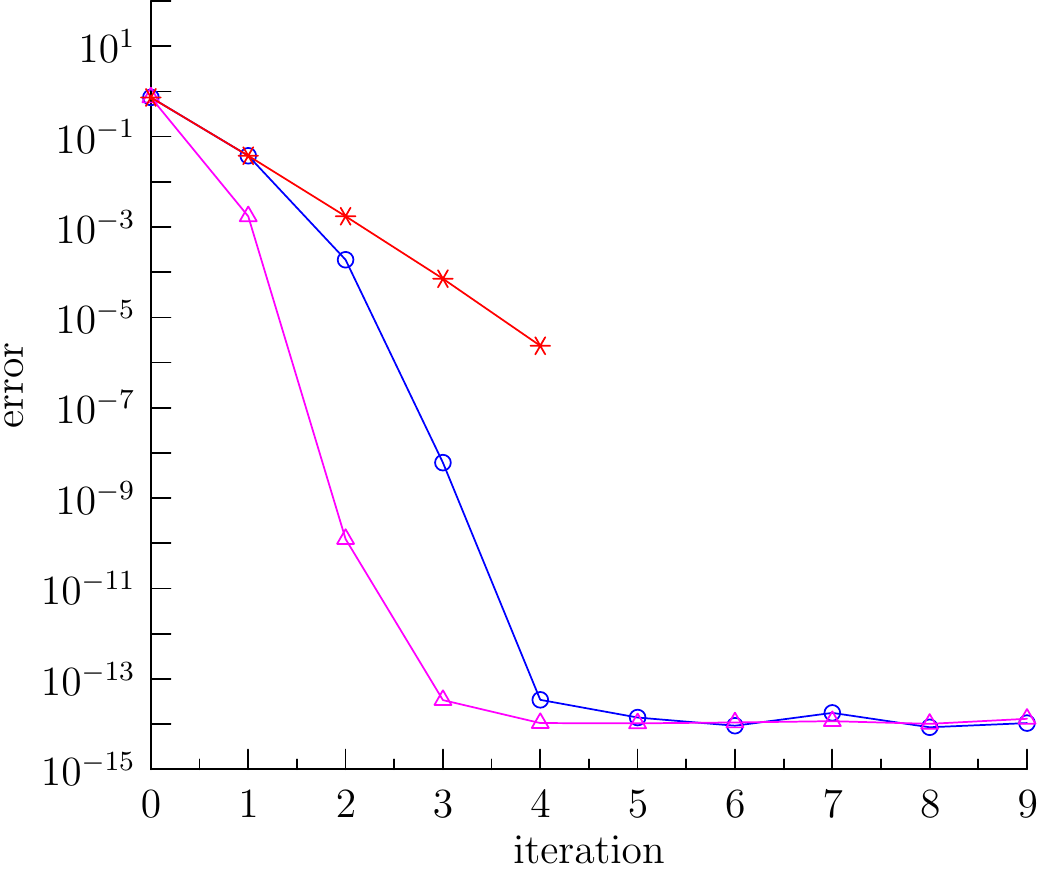} &
\includegraphics[width=0.49\textwidth]{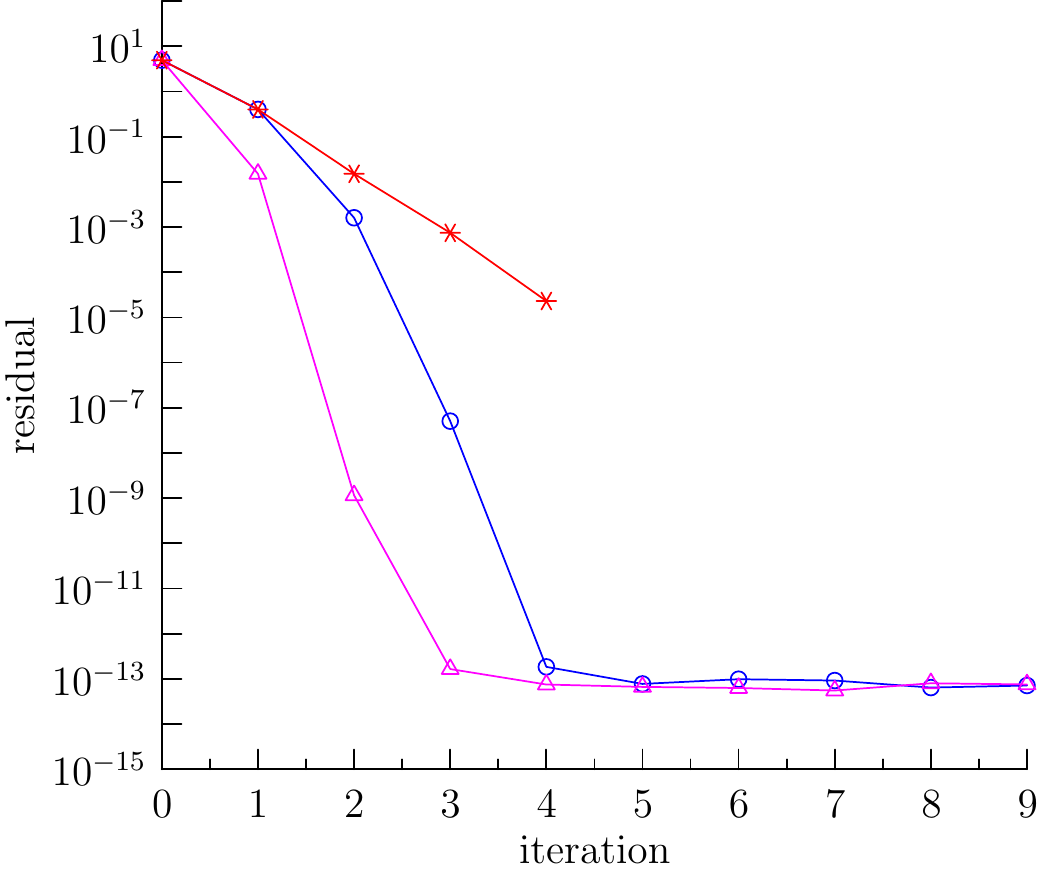}\\
(a) & (b)
\end{tabular}
\caption{Convergence of (a) iterates $\|\bx_n-\bx_*\|$ and (b) residuals
$\|f(\bx_n)-f(\bx_{\ttrue})\|$, for the inverse Born series
\textcolor{red}{($*$)}, Gauss-Newton \textcolor{blue}{($\circ$)} and
Chebyshev-Halley \textcolor{magenta}{($\triangle$)} methods. These methods are
applied to the Neumann series problem of section~\ref{sec:neumann}.}
\label{fig:neumann}
\end{figure}

%%%%%%%%%%%%%%%%%%%%%%%%%%%%%%%%%%%%%%%%%%%%%%%%%%%%%%%%%%%%%%%%%%%%%%%%
% Schroedinger with sparse internal measurements
% !TEX root = ItInvBornSchroSparse.tex
%%%%%%%%%%%%%%%%%%%%%%%%%%%%%%%%%%%%%%%%%%%%%%%%%%%%%%%%%%%%%%%%%%%%%%%
\section{Forward and inverse Born series for the Schr\"odinger problem\\
with discrete internal measurements} \label{sec:sparsebounds} 

Recall from section~\ref{sec:bornconv} that local convergence of the
forward and inverse Born series follows from showing that the forward
Born operators $a_n$ satisfy bounds of the type \eqref{eq:fwdest}. We
show in section~\ref{sec:bounds:conv} that bounds of the type
\eqref{eq:fwdest} hold for the operators $a_n$ for the Schr\"odinger
problem with discrete internal measurements (defined in
\eqref{eq:bornsparse}). Then we report in section~\ref{sec:bounds:num} a
numerical approximation to the convergence radius of inverse Born
series, in a setup related to the hydraulic tomography application of
section~\ref{sec:ht}.

%%%%%%%%%%%%%%%%%%%%%%%%%%%%%%%%%%%%%%%%%%%%%%%%%%%%%%%%%%%%%%%%%%%%%%%%
\subsection{Bounds on the forward Born operators}\label{sec:bounds:conv}
We recall from section~\ref{sec:qsparse} that the parameter space for
this problem is $X=L^\infty(\Omegat)$ where $\Omegat \subset \Omega$ and
the distance between $\partial \Omega$ and $\partial \Omegat$ is
positive. The difference between the unknown and the reference
Schr\"odinger potentials is assumed to be supported in $\Omegat$. The
measurements space is $Y = \complex^{N \times N}$ where $N$ is the
number of sources used and the norm is the entry-wise $\ell_\infty$ norm
of a matrix in $\complex^{N \times N}$.

The proof of lemma~\ref{lem:aninfbound} below follows a pattern similar
to \cite{Moskow:2008:CSI}. There are two main differences. The first is
that we work with finitely many measurements. The second is that we
allow the (possibly complex) reference Schr\"odinger potential $q_0$ to
be in $L^\infty(\Omega)$, whereas in \cite{Moskow:2008:CSI} the
reference potential is assumed to be constant and real. The bound
\eqref{eq:fwdest} immediately gives a smallness condition that is
sufficient for convergence of the forward Born series. The smallness
condition we obtain is identical to that in \cite{Moskow:2008:CSI}. This
is to be expected because the underlying equation is the same and only
the measurements differ.

To prove lemma~\ref{lem:aninfbound}, we need that the reference
Schr\"odinger potential $q_0(\bx) \in L^\infty(\Omega)$ is such that the
only solution to
  \begin{equation}
   \left\{
    \begin{aligned}
     -\Delta u + q_0 u &= 0, &&\text{in $\Omega$},\\
     u &=0, &&\text{on $\partial \Omega$,}
    \end{aligned}
   \right.
   \label{eq:qhomog}
  \end{equation}
is $u=0$. Such $q_0$ are sometimes called ``non-resonant'' and we assume
that all the Schr\"odinger potentials that we deal with in what follows
are non-resonant. We also need two properties for the Green function
$G_{q_0}(\bx,\by)$ for the Schr\"odinger equation (as
defined in section~\ref{sec:qsparse}):
\begin{enumerate}
 \item The function $\bx \mapsto G_{q_0}(\bx,\by)$ is in $L^1(\Omega)$
 for all $\by \in \Omega$.
 \item The function $\by \mapsto \| G_{q_0} (\cdot,\by)
 \|_{L^1(\Omega)}$ is in $L^\infty(\Omega)$.
\end{enumerate}
These properties can be easily verified in both $\real^2$ and $\real^3$ for $G_0$ (i.e. when $q_0 \equiv
0$) and hold for general bounded $q_0$. Indeed,  we have $(\Delta + q_0)
(G_{q_0} - G_0) = - q_0 G_0$. Since the right hand side belongs to
$L^2(\Omega)$, the difference $G_{q_0} - G_0$ must be in
$H^2_{\text{loc}}(\Omega)$ by standard elliptic regularity estimates
(see e.g.  \cite{Evans:2010:PDE}) and therefore continuous (by Sobolev
embeddings).  This argument shows that $(G_{q_0} - G_0)(\bx,\by)$ is
continuous as function of $\bx$ and for all $\by$. By reciprocity
$G_{q_0} - G_0$ is continuous on $\Omega \times \Omega$. Therefore
$G_{q_0}$ satisfies the desired properties.

We can now show boundedness of the operators $a_n$ for the Schr\"odinger
equation with discrete measurements. The proof of the following lemma is
similar to that in \cite{Moskow:2008:CSI}.

%%%%%%%%
\begin{lemma}\label{lem:aninfbound}
Let $q_0(\bx)$ be a (possibly complex) non-resonant Schr\"odinger
potential. Then the operators $a_n$ defined in \eqref{eq:bornsparse}
satisfy the bounds
\begin{equation}\label{eq:aninfbound}
||a_n|| \leq \alpha\mu^{n},
\end{equation}
with $\alpha=\nu/\mu$, and where $\nu$ and $\mu$ are constants depending
on $\Omega$ and $q_0$ only (see equations \eqref{eq:nuinf} and
\eqref{eq:muinf} below for their definition). The norm on $a_n$ is the
operator norm in $\cL( X^{\otimes n}, Y)$, with parameter space $X$ and
data space $Y$ as in section~\ref{sec:qsparse}.
\end{lemma}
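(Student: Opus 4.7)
The plan is to bound each entry of $a_n(\eta_1\otimes\cdots\otimes\eta_n)\in\complex^{N\times N}$ pointwise and then take the entrywise $\ell^\infty$ maximum. Starting from the explicit formula \eqref{eq:bornsparse}, I would pull the norms $\|\eta_k\|_{L^\infty(\Omegat)}$ outside the integral and take absolute values; the problem reduces to bounding
\[
I_{i,j}\;:=\;\int_{\Omegat^{n+2}}|\phi_j(\bx)|\,|G_{q_0}(\bx,\by_1)|\cdots|G_{q_0}(\by_n,\bz)|\,|\phi_i(\bz)|\,d\bz\,d\by_1\cdots d\by_n\,d\bx.
\]

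I would then evaluate $I_{i,j}$ by a telescoping sequence of one-variable integrations. Define
\[
\mu\;=\;\mathop{\mathrm{ess\,sup}}_{\by\in\Omega}\|G_{q_0}(\cdot,\by)\|_{L^1(\Omega)},
\]
which is finite by property~(ii) stated in the excerpt. Since $-\Delta+q_0$ is formally symmetric whether or not $q_0$ is real, the Green function satisfies reciprocity $G_{q_0}(\bx,\by)=G_{q_0}(\by,\bx)$, so $\sup_\bx\|G_{q_0}(\bx,\cdot)\|_{L^1}\le\mu$ as well. Integrating in the order $\bz,\by_n,\by_{n-1},\ldots,\by_1,\bx$: each of the first $n+1$ integrations absorbs exactly one $|G_{q_0}|$ factor against an $L^1$ measure and contributes a factor of $\mu$, while the final $\bx$-integration sees only $|\phi_j(\bx)|$ and contributes $\|\phi_j\|_{L^1(\Omegat)}$. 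This gives
\[
I_{i,j}\;\le\;\mu^{n+1}\,\|\phi_i\|_{L^\infty(\Omegat)}\,\|\phi_j\|_{L^1(\Omegat)},
\]
and hence
\[
\bigl|[a_n(\eta_1\otimes\cdots\otimes\eta_n)]_{i,j}\bigr|\;\le\;\mu^{n+1}\,\|\phi_i\|_\infty\,\|\phi_j\|_{L^1}\,\prod_{k=1}^n\|\eta_k\|_\infty.
\]

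Setting $\nu:=\mu^{2}\max_i\|\phi_i\|_{L^\infty(\Omegat)}\max_j\|\phi_j\|_{L^1(\Omegat)}$, taking the maximum over $i,j$ on the left and the supremum over pure tensors of unit norm in $X^{\otimes n}$ yields the claimed bound $\|a_n\|\le\alpha\mu^n$ with $\alpha=\nu/\mu$. The only nontrivial point is the finiteness of $\mu$ for an arbitrary bounded (possibly complex) non-resonant potential $q_0$; this is exactly what the excerpt already handles by writing $(-\Delta+q_0)(G_{q_0}-G_0)=-q_0G_0$ and invoking standard elliptic regularity together with Sobolev embeddings to conclude that $G_{q_0}-G_0$ is continuous on $\Omega\times\Omega$, so $G_{q_0}$ inherits the integrability properties of $G_0$. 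Everything else is bookkeeping that parallels \cite{Moskow:2008:CSI}, the one small new wrinkle being that the measurement norm is the entrywise $\ell^\infty$ norm on the finite-dimensional space $\complex^{N\times N}$ rather than an $L^p$ norm over $\partial\Omega\times\partial\Omega$, which is precisely what lets the two $\phi$ factors be pulled out as a sup-norm and an $L^1$-norm in the final step.
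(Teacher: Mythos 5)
Your proof is correct and follows essentially the same route as the paper's: bound each matrix entry, pull out the $L^\infty$ norms of the $\eta_k$, and collapse the chain of Green functions one variable at a time using the $L^1$ bound $\mu$, whose finiteness rests on the stated properties of $G_{q_0}$ (plus reciprocity, which the paper also uses). The only difference is bookkeeping at the two endpoints — you bound the $\bz$-integral by $\|\phi_i\|_{L^\infty}\mu$ and the final $\bx$-integral by $\|\phi_j\|_{L^1}$, arriving at $\nu=\mu^2\max_i\|\phi_i\|_{L^\infty}\max_j\|\phi_j\|_{L^1}$ rather than the paper's constant \eqref{eq:nuinf}, $\nu=\left(\sup_i\sup_{\bx\in\Omegat}\int_{\Omegat}|G_{q_0}(\bx,\by)\phi_i(\by)|\,d\by\right)^2|\Omega|$; this is immaterial, since all downstream results only require a bound of the form \eqref{eq:fwdest}.
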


\begin{proof}
Consider $\eta_1\otimes\cdots\otimes \eta_n\in (L^\infty(\Omegat))^{\otimes n}$ and observe
\begin{equation}\label{eq:aninf}
\begin{aligned}
||a_n(\eta_1\otimes\cdots\otimes \eta_n)|| 	&=
\sup_{i,j}|(a_n(\eta_1\otimes\cdots\otimes \eta_n))_{i,j}| \\
			&\leq ||\eta_1\otimes\cdots\otimes
			\eta_n||_{(L^\infty(\Omegat))^{\otimes n}}
			\sup_{i,j} \int_{{\Omegat}^{n+2}}\big|G_{q_0}(\bx,\by_1)\cdot\\
			&\qquad\cdots G_{q_0}(\by_{n-1},\by_n)
			G_{q_0}(\by_n,\bz) \phi_{i}(\bz)
			\phi_{j}(\bx)\big|d\bz d\by_1 \cdots d\by_n
			d\bx.
\end{aligned}
\end{equation}
We start by estimating $||a_1||$ as follows,
\begin{equation*}\label{eq:a1estimate}
\begin{aligned}
||a_1||  & \leq \sup_{i,j} \int_{{\Omegat} \times {\Omegat} \times
{\Omegat}}{\big|G_{q_0}(\bx,\by_1)G_{q_0}(\by_1,\bz)\phi_{i}(\bz)\phi_{j}(\bx)
\big|d\bz d\by_1 d\bx} \\
		& \leq
		\sup_{i,j}\int_{\Omegat}\int_{\Omegat}\big|G_{q_0}(\by_1,\bz)\phi_{i}(\bz)\big|d\bz
		\int_{\Omegat}\big|G_{q_0}(\bx,\by_1)\phi_{j}(\bx)\big|d\bx d\by_1 \\
		& \leq \sup_i
		\Bigg(\sup_{\bx\in{\Omegat}}{\int_{\Omegat}\big|G_{q_0}(\bx,\by)\phi_i(\by)\big|
		d\by}\Bigg)^2|\Omega|.
\end{aligned}
\end{equation*}
Since $q_0$ is assumed to be non-resonant and using that $\phi_i \in
L^\infty(\Omega)$, the quantity
\begin{equation}\label{eq:nuinf}
\begin{aligned}
\nu 	&=
\Bigg(\sup_i\sup_{\bx\in{\Omegat}}{\int_{\Omegat}\big|G_{q_0}(\bx,\by)\phi_i(\by)\big|d\by}\Bigg)^2|\Omega|
\end{aligned}
\end{equation}
is bounded. We have established that $||a_1|| \leq \nu$.

For the remaining Born operators, we proceed recursively. Considering again
\eqref{eq:aninf} for $n \geq 2$, we have
\begin{equation*}
\begin{aligned}
||a_n||	 & \leq \sup_{i,j} \int_{\Omegat^{n+2}}\big|G_{q_0}(\bx,\by_1)G_{q_0}(\by_1,\by_2)\cdot\\
		 & \qquad\cdots G_{q_0}(\by_{n-1},\by_n)G_{q_0}(\by_n,\bz)
		 \phi_{i}(\bz) \phi_{j}(\bx) \big|d\bz d\by_1 \cdots d\by_n d\bx \\
		 & \leq \sup_{i,j} \Bigg(\sup_{\by_1 \in{\Omegat}}\int_{\Omegat}{\big|G_{q_0}(\bx,\by_1)\phi_{j}(\bx)\big|d\bx}\Bigg) \Bigg(\sup_{\by_n \in
		 {\Omegat}}\int_{\Omegat}{\big|G_{q_0}(\by_n,\bz)\phi_{i}(\bz)\big|d\bz}\Bigg) \\
		 & \cdot \int_{{\Omegat}^n}{\big|G_{q_0}(\by_1,\by_2)\cdots G_{q_0}(\by_{n-1},\by_n)\big| d\by_1 \cdots d\by_n} \\
		 & \leq \Bigg( \sup_i \sup_{\bx \in {\Omegat}}
		 \int_{\Omegat} \big|G_{q_0}(\bx,\by)\phi_i(\by)\big|d\by\Bigg)^2 I_{n-1}
\end{aligned}
\end{equation*}
where
\begin{equation*}
I_{n-1} = \int_{{\Omegat}^n}{\big|G_{q_0}(\by_1,\by_2)\cdots
G_{q_0}(\by_{n-1},\by_n)\big| d\by_1 \cdots d\by_n}.
\end{equation*}
Estimating $I_{n-1}$ we find that
\begin{equation*}
\begin{aligned}
I_{n-1} 	&\leq \sup_{\by_{n-1} \in {\Omegat}} \int_{\Omegat}{
\big|G_{q_0}(\by_{n-1},\by_n)\big|d\by_n} \cdot
\int_{{\Omegat}^{n-1}}{\big|G_{q_0}(\by_1,\by_2)\cdots
G_{q_0}(\by_{n-2},\by_{n-1})\big| d\by_1 \cdots d\by_{n-1}} \\
		&\leq \mu I_{n-2},
\end{aligned}
\end{equation*}
where the quantity
\begin{equation}\label{eq:muinf}
\mu = \sup_{\bx \in {\Omegat}} ||G_{q_0}(\bx,\cdot)||_{L^1({\Omegat})}
\end{equation}
is finite by the properties that $G_{q_0}$ satisfies.  Finally, noting that
\begin{align*}
I_1 &= \int_{{\Omegat} \times {\Omegat}}{\big|G_{q_0}(\by_1,\by_2)\big|d\by_1 d\by_2}\nonumber \\
		&\leq \mu |\Omega|,
\end{align*}
it follows that
\begin{equation*}
I_{n-1} \leq |\Omega|\mu^{n-1},
\end{equation*}
and thus \begin{equation*}
||a_n|| \leq \left( \sup_i \sup_{\bx \in {\Omegat}}
||G_{q_0}(\bx,\cdot)||_{L^1(B_\rho(\bx_i))}\right)^2|\Omega|\mu^{n-1}=\alpha\mu^{n}.
\end{equation*}
\end{proof}

\begin{remark}[$L^p$ Bounds] 
Bounds similar to those in lemma~\ref{lem:aninfbound} can be proven
when the parameter space is $X=L^2(\Omega)$ and the data space is
$Y=\complex^{N\times N}$, endowed with the Frobenius norm. Once we
have bounds for the $\infty$ and $2$ norms, it is possible to invoke the
Riesz-Thorin theorem (as in \cite{Moskow:2008:CSI}) to show bounds for
$2 \leq p \leq \infty$ by interpolation. In this case the data space is
$X=L^p(\Omega)$ and the parameter space is $Y = \complex^{N \times N}$,
endowed with the entry-wise $p-$norm (i.e. the $p-$norm of the
$\complex^{N^2}$ vector obtained by stacking the columns of a matrix in
$\complex^{N \times N}$).
\end{remark} 

Having established norm bounds on the operators $a_n$ for the discrete
measurements Schr\"odinger problem, we can apply the results from
section~\ref{sec:bornconv} to establish local convergence of the forward
Born series, local convergence of the inverse Born series (provided the
linear operator $b_1$ used to prime the series has sufficiently small
norm, see theorem~\ref{thm:smallness:inv}), stability of the inverse
Born series (theorem~\ref{thm:stability}) and even an error estimate
(theorem~\ref{thm:error}). The actual choice of $b_1$ is discussed in
section~\ref{sec:numerics}.

%%%%%%%%%%%%%%%%%%%%%%%%%%%%%%%%%%%%%%%%%%%%%%%%%%%%%%%%%%%%%%%%%%%%%%%%
\subsection{Numerical illustration}\label{sec:bounds:num}
Applying theorem~\ref{thm:smallness:inv} to the Schr\"odinger problem
with discrete measurements, we can expect the inverse Born series to
converge when the difference $d$ between the data for the unknown and
reference Schr\"odinger potentials satisfies
\[
 \| d \| \leq \frac{1}{(1+\alpha)\mu \| b_1\|},
\]
where the constants $\alpha=\nu/\mu$ and $\mu$ are constants defined by
\eqref{eq:nuinf} and \eqref{eq:muinf} and the norms are as in
section~\ref{sec:qsparse}. 

In preparation for the application to hydraulic tomography, we consider
the setup depicted in figure~\ref{fig:setup} with computational domain
$\Omega = [0,1]^2$. The distance between
$\Omega$ and $\Omegat$ is $\epsilon \in [0,1/4]$ and the sources
$\phi_i$ are supported in disks of radius $0.05$ with centers
$(0.2k,0.2l)$, for $k,l=1,\ldots,4$. The sources are $\phi_i(\bx) =
\phi(\bx-\bx_i)$ where $\bx_i$ is the center of the disk support and
$\phi$ is an infinitely smooth function with $0 \leq \phi(\bx) \leq 1$.  Although
theorem~\ref{thm:smallness:inv} allows for the supports of the sources
to overlap, we take them to be disjoint as this is the case in
the hydraulic tomography application. 

The constants $\mu$ and $\nu$ are approximated by solving appropriate
(forward) Schr\"odinger problems with $q_0 = 0$.  The grid we use for
this purpose is uniform and consists of the nodes $(kh,lh)$ for
$k,l=0,\ldots,400$ and $h=1/400$. We display in
figure~\ref{fig:radiusconv} the radius of convergence of the inverse
Born series predicted by theorem~\ref{thm:smallness:inv}, assuming $\|
b_1 \| = 1$. We observe that the radius of convergence increases as
$\epsilon$ increases, or in other words, the larger the region where we
assume the Schr\"odinger potential is known, the larger the
perturbations in the data the method can handle.

\begin{figure}
 \centering 
  \includegraphics[width=0.3\textwidth]{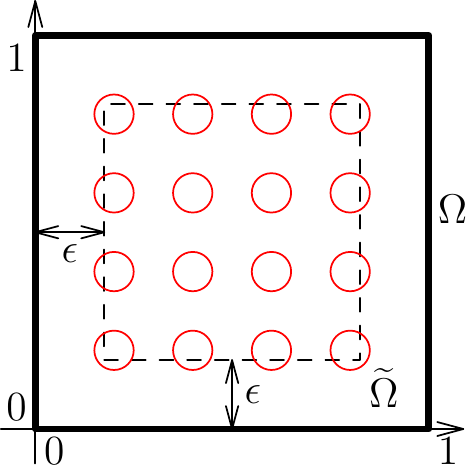}
 \caption{Setup for the numerical experiments with the Schr\"odinger
 problem with internal measurements. The domain $\Omega$ is the unit
 square. The domain $\Omegat$ where the Schr\"odinger potential is
 unknown is in dotted line and its boundary $\partial\Omegat$
 is at a distance $\epsilon$ from $\partial\Omega$. The supports of the
 functions used as source terms/measurements are the red circle.}
  \label{fig:setup}
\end{figure}

\begin{figure}
\centering
%\includegraphics[width=0.45\textwidth]{mu}
%\hspace{1em}
\includegraphics[width=0.65\textwidth]{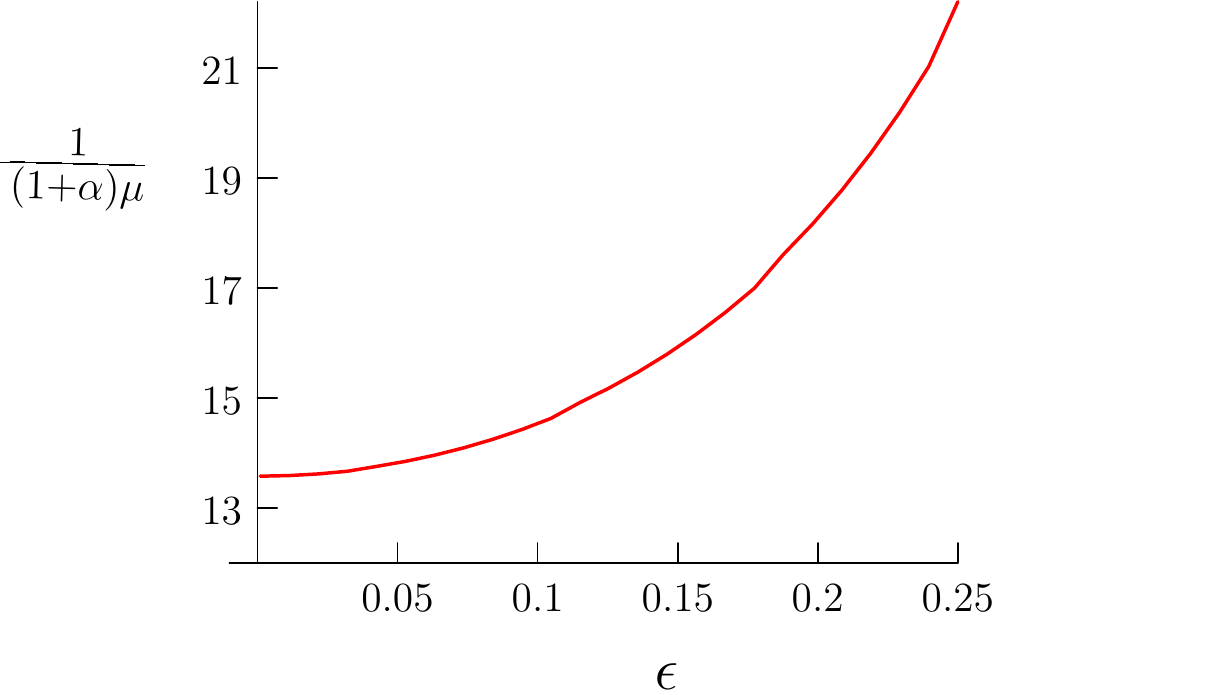}
\caption{Numerical approximation of the radius of convergence
for the inverse Born series for the Schr\"odinger problem with discrete
internal measurements and assuming $\|b_1\| \geq 1$. The reference
Schr\"odinger potential is $q_0=0$ and the setup is that given in
figure~\ref{fig:setup}.}\label{fig:radiusconv}
\end{figure}

%%%%%%%%%%%%%%%%%%%%%%%%%%%%%%%%%%%%%%%%%%%%%%%%%%%%%%%%%%%%%%%%%%%%%%%%
% Application to HT
% !TEX root = ItInvBornSchroSparse.tex
\section{Application to transient hydraulic tomography}\label{sec:ht}

Consider an underground aquifer confined in a bounded domain $\Omega$.
The head or hydraulic pressure $u_i(\bx,t)$ in the aquifer due to
injecting water in the $i-$th well satisfies the equation
\begin{equation}\label{eq:hydtom}
\left\{\begin{aligned}
S\frac{\partial u_i}{\partial t} &= \nabla\cdot(\sigma\nabla u_i) -
\phi_i, &&\text{ for } \bx\in\Omega,~t>0,\\
 u_i(\bx,t) 	 &= 0,  &&\text{ for } \bx\in\partial\Omega,~t>0,\\
 u_i(\bx,0) 	 &= g(\bx),  &&\text{ for } \bx\in\Omega.
 \end{aligned}\right.
\end{equation}
where $i=1,\ldots,N$. Here we assume there are no sources or leaks of
water in the aquifer, other than those prescribed at the wells.  Hence
the source term $\phi_i(\bx,t)$ is supported at the $i-$th well and
represents the water injected at the $i-$th well.  The physical
properties of the aquifer are modeled by the storage coefficient
$S(\bx)$ and the hydraulic conductivity $\sigma(\bx)$. The initial head
(at $t=0$) is given by $g(\bx)$.

The inverse problem of hydraulic tomography that we consider here, is to
determine the coefficients $\sigma$ and $S$ from knowledge of the
discrete internal measurements
\begin{equation}
M_{i,j}(t)=\int_\Omega \phi_j(\bx,t) * u_i(\bx,t)d\bx, ~i,j=1,\ldots,N,
\label{eq:tdmeas}
\end{equation}
where the convolution is in time. Physically these measurements
correspond to time domain measurements at the $j-$th well of a spatial
average of the hydraulic pressure $u_i$ generated by injecting in the
$i-$th well. Here for simplicity, we use for the impulse response (in
time) of the $j-$th measurement well the function $\phi_j(\bx,t)$. In a
more general setup, the injection and measurement ``well functions'' can
be different.

%%%%%%%%%%%%%%%%%%%%%%%%%%%%%%%%%%%%%%%%%%%%%%%%%%%%%%%%%%%%%%%%%%%%%%%%
\subsection{Reformulation as a discrete internal measurements Schr\"odinger problem}
\label{sec:ht_schro}
The frequency domain version of problem \eqref{eq:hydtom} is
\begin{equation}\label{eq:hydtom-freq}
\left\{
\begin{aligned}
\nabla \cdot (\sigma \nabla \hu_i) - \ii\omega S \hu_i &= \hphi_i,
&&\text{ for } \bx\in\Omega,\\
\hu_i &= 0, &&\text{ for } \bx\in\partial\Omega,
\end{aligned}\right.
\end{equation}
where the hat denotes Fourier transform in time, i.e.
\[
\hu_i(\bx,\omega) = \int_\real u_i(\bx,t)e^{-\ii\omega t}dt
\quad\text{ and }\quad
\hphi_i(\bx,\omega) = \int_\real \phi_i(\bx,t)e^{-\ii\omega t}dt.
\]
The inverse problem is now to recover $\sigma$ and $S$ from the
discrete internal measurements
\begin{equation}
\hM_{i,j}(\omega) = \int_\Omega
\hphi_j(\bx,\omega)\hu_i(\bx,\omega)d\bx,
\label{eq:fdmeas}
\end{equation}
which is the Fourier transform in time of the discrete internal
measurements for the time domain problem \eqref{eq:tdmeas}.

Next we use the Liouville transformation by defining $v_i =
\sigma^{1/2}\hu_i$.  If $\hu_i$ satisfies
\eqref{eq:hydtom-freq} then $v_i$ must satisfy the Schr\"odinger
equation
\begin{equation}\label{eq:hydtom-liouv}
\left\{
\begin{aligned}
\Delta v_i -\left(\frac{\Delta
\sigma^{1/2}}{\sigma^{1/2}}+\frac{\ii\omega S}{\sigma}\right)v_i &=
\frac{\hphi_i}{\sigma^{1/2}}, &&\text{
for } \bx\in\Omega,\\
v_i &= 0, &&\text{ for } \bx\in\partial\Omega.
\end{aligned}
\right.
\end{equation}
The internal measurements $\hM_{i,j}(\omega)$ can now be
expressed in terms of $v_i$ as
\[
\hM_{i,j}(\omega) = \int_\Omega \hphi_j(\bx,\omega)\hu_i(\bx,\omega)d\bx
= \int_\Omega
\frac{\hphi_j(\bx,\omega)}{\sigma^{1/2}(\bx)}v_i(\bx,\omega)d\bx.
\]
Hence the measurements $\hM_{i,j}(\omega)$ are of the form defined in
\eqref{eq:sparsedata} with test functions $\hphi_i/\sigma^{1/2}$
(modeling both injection and measurement).

If we do have access to the inside of the wells (i.e. $\supp \hphi_i$),
it is reasonable to assume that $\sigma$ is known in $\supp \hphi_i$.
Hence the test functions $\hphi_i/\sigma^{1/2}$ are known and we can use
any method for solving the inverse Schr\"odinger problem with discrete
data to obtain an approximation to the complex Schr\"odinger potential 
\begin{equation}
Q(\bx;\omega) = 
\frac{\Delta
\sigma^{1/2}}{\sigma^{1/2}}+\frac{\ii\omega S}{\sigma}, 
~\text{for $\bx \in \Omega$}.
\label{eq:cpot}
\end{equation}

\begin{remark}
A limitation of transforming the hydraulic tomography problem into an
inverse Schr\"odinger problem is that the conductivity $\sigma$ appears
as $\Delta \sigma^{1/2}/\sigma^{1/2}$ in the Schr\"odinger potential.
Therefore any high (spatial) frequency components in $\sigma^{1/2}$ are
magnified. The resulting Schr\"odinger potential can easily fall outside of the
radius of convergence of the inverse Born series. It may be possible to
overcome this limitation if we apply the inverse Born series to the
hydraulic tomography problem directly (i.e. without doing the Liouville
transform).
\label{rem:sigma_scale}
\end{remark}

%%%%%%%%%%%%%%%%%%%%%%%%%%%%%%%%%%%%%%%%%%%%%%%%%%%%%%%%%%%%%%%%%%%%%%%%
\subsection{Recovery of $S$ and $\sigma$ from one frequency}
\label{sec:onefreq}

Once we have approximated $Q(\bx;\omega)$ for a single (known) frequency $\omega$,
the real part of $Q(\bx;\omega)$ can be used to estimate the hydraulic conductivity
$\sigma$. This can be achieved by solving for $\sigma^{1/2}(\bx)$ in the equation
\[
\Delta \sigma^{1/2} - \Re(Q(\bx;\omega))\sigma^{1/2} = 0,
\]
on the aquifer without the wells, i.e.
\[
 \Omega' \equiv \Omega \backslash \bigcup_{i=1}^n \supp \hphi_i,
\]
and with Dirichlet boundary conditions at $\partial \Omega'$ determined
from the (assumed) knowledge of $\sigma$ at the measurement wells and at
$\partial \Omega$. An estimate of the storage coefficient $S$ from
$\Im(Q(\bx;\omega))$ and $\sigma(\bx)$ follows since \[S(\bx) =
\sigma(\bx) \Im(Q(\bx;\omega))/\omega.\]

In principle, measurements $\hM_{i,j}(\omega)$ for one single frequency
are enough to find both parameters $\sigma(\bx)$ and $S(\bx)$.
Unfortunately, this procedure seems to be much more sensitive to changes
in $\sigma$ than to changes in $S$. This is due to $\Delta \sigma^{1/2}$
appearing in the expression of $Q(\bx;\omega)$ (see
remark~\ref{rem:sigma_scale}). We deal with this problem by using data
for two frequencies as is explained below.

%%%%%%%%%%%%%%%%%%%%%%%%%%%%%%%%%%%%%%%%%%%%%%%%%%%%%%%%%%%%%%%%%%%%%%%%
\subsection{Recovery of $S$ and $\sigma$ from two frequencies}
\label{sec:twofreq}
Here the data we have is $\hM_{i,j}(\omega_1)$ and $\hM_{i,j}(\omega_2)$
for two frequencies $\omega_1\neq \omega_2$ and we use it to solve two
discrete measurements Schr\"odinger problems for $Q(\bx;\omega_1)$ and
$Q(\bx;\omega_2)$, for $\bx \in \Omega$. A good rule of thumb is to
choose the frequencies so that $\omega_1$ is sufficiently low to make
$\Re(Q(\bx;\omega_1))$ the largest term in $Q(\bx;\omega_1)$ and
$\omega_2$ is sufficiently large to make $\Im(Q(\bx;\omega_2))$ the
largest term in $Q(\bx;\omega_2)$. For each point $\bx$ in $\Omega'$
(the domain without the wells), we solve for $r_1(\bx)$ and $r_2(\bx)$
in the $2\times 2$ system: 
\begin{equation}
 \begin{bmatrix}
  1 & \ii \omega_1\\
  1 & \ii \omega_2
 \end{bmatrix}
 \begin{bmatrix}
  r_1(\bx)\\
  r_2(\bx)
 \end{bmatrix}
 = \begin{bmatrix}
  Q(\bx;\omega_1)\\
  Q(\bx;\omega_2)
   \end{bmatrix}.
\end{equation}
Then to estimate the conductivity we solve for $\sigma^{1/2}$ in the
equation:
\begin{equation}
 \Delta \sigma^{1/2} - r_1(\bx) \sigma^{1/2} = 0,~\text{for $\bx \in
 \Omega'$},
 \label{eq:twofreq:r1}
\end{equation}
with Dirichlet boundary condition given by the knowledge of $\sigma$ on
$\partial \Omega'$.
Once we know $\sigma$, the storage coefficient $S$ can be easily
obtained from $r_2$, indeed: 
\begin{equation}
 S(\bx) = \sigma(\bx) r_2(\bx).
 \label{eq:Sest}
\end{equation}

%%%%%%%%%%%%%%%%%%%%%%%%%%%%%%%%%%%%%%%%%%%%%%%%%%%%%%%%%%%%%%%%%%%%%%%%
% Numerics
% !TEX root = ItInvBornSchroSparse.tex
\section{Numerical Experiments}\label{sec:numerics} 

We now present numerical experiments comparing inverse Born series with the
Gauss-Newton and Chebyshev-Halley methods for both the discrete internal
measurements Schr\"odinger problem (section~\ref{sec:schro_reconst}) and an
application to transient hydraulic tomography (section~\ref{sec:ht_reconst}).

%%%%%%%%%%%%%%%%%%%%%%%%%%%%%%%%%%%%%%%%%%%%%%%%%%%%%%%%%%%%%%%%%%%%%%%%
\subsection{Schr\"odinger potential reconstructions from discrete
internal measurements}\label{sec:schro_reconst}

As discussed in section~\ref{sec:qsparse}, our objective is to recover
an unknown Schr\"odinger potential $q$ from the measurements $f(q)=\bD$,
where the entries $D_{i,j}$ of the $N \times N$ matrix $\bD$ are given by
\eqref{eq:sparsedata}.

We discretize the computational domain $\Omega = [0,1]^2$ with a uniform
grid consisting of the nodes $(kh,lh)$, for $k,l=0,\ldots,400$ and
$h=1/400$. We use a total of $16$ measurement functions $\phi_j$, which
are smooth and satisfy: $\|\phi_j\|_{L^\infty(\Omega)}=1$ for
$j=1,\ldots,16$; $\phi_j$ is compactly supported on a circle of radius
$\rho=0.05$; and the centers of the wells are uniformly spaced in the domain
at the points $(0.2m,0.2n)$ for $m,n=1,\ldots,4$. The Laplacian in the
Schr\"odinger equation is discretized with the usual five point finite
differences stencil and the true Schr\"odinger potential is simply evaluated at
the grid nodes. The measurements $D_{i,j} = \langle \phi_j, u_i
\rangle_{L^2(\Omega)}$ involve integrals that are approximated by the
trapezoidal rule on the grid. Measurements $f(q_0)$ for the reference
potential $q_0$ are computed in the same grid.  The data that we use for
the reconstructions is $f(q)-f(q_0)$.

The reconstructions are performed on a different (coarser) grid
consisting of the nodes $(kh_c,lh_c)$ for $k,l=0,\ldots,80$ and
$h_c=1/80$. We compare the results obtained from a truncated inverse
Born series of order 5, and 10 iterations of the Gauss-Newton and
Chebyshev-Halley methods. These three reconstructions are applied to
$F$, a coarse grid version of the map $f$. For instance, the
reconstructions for the inverse Born series are
\[
 \sum_{n=1}^k B_n ( ( f(q) - f(q_0) )^{\otimes n} ),
\]
where the coefficients $B_n$ are the inverse Born series coefficients
for the coarse grid $F$ (rather than those for the fine grid
$f$, which would be an inverse crime). For the inverse Born series, the
operator $B_1$ is a regularized pseudoinverse of $A_1$ (i.e.  the
linearization of the coarse grid forward map $F$) where the singular
values of $A_1$ which are less than $0.01$ times the largest singular
value (of $A_1$) are treated as zero.  The same regularization is used
for the Jacobians involved in the Gauss-Newton and Chebyshev-Halley
methods. We use $q_0=0$ as the reference potential for the inverse Born
series as well as the initial guess for the iterative Gauss-Newton and
Chebyshev-Halley methods. 

Figure~\ref{fig:qs_noiseless} shows the reconstructions of a real smooth
Schr\"odinger potential $-14\leq q(\bx) \leq 4$ and a real piecewise
constant potential with $-6 \leq q(\bx) \leq 12$. In both cases, the
potential and the generated data are small enough to satisfy the
hypotheses of theorem~\ref{thm:error}.  Figure~\ref{fig:qs_noise1}
displays the reconstructions of the same potentials from noisy data. The
noisy data is obtained by first generating the true data $f(q)-f(q_0)$
as above, and then perturbing it with $1\%$ zero mean additive Gaussian
noise, i.e. with standard deviation $0.01 \max_{i,j} |(f(q)-f(q_0))
_{i,j}|$.  Similarly, figure~\ref{fig:qs_noise} displays the
reconstructions with 5\% additive Gaussian noise, i.e. with zero mean
and standard deviation $0.05\max_{i,j}|(f(q)-f(q_0))_{i,j}|$. In the
experiments with noise present, the pseudoinverses of the Jacobians have
been additionally regularized to compensate for the noise level (i.e.
only singular values above $0.02$ (resp. $0.06$) times the largest
singular value are retained for inversion for 1\% (resp. 5\%) noise).

\begin{figure}
\centering
\includegraphics[width=\textwidth]{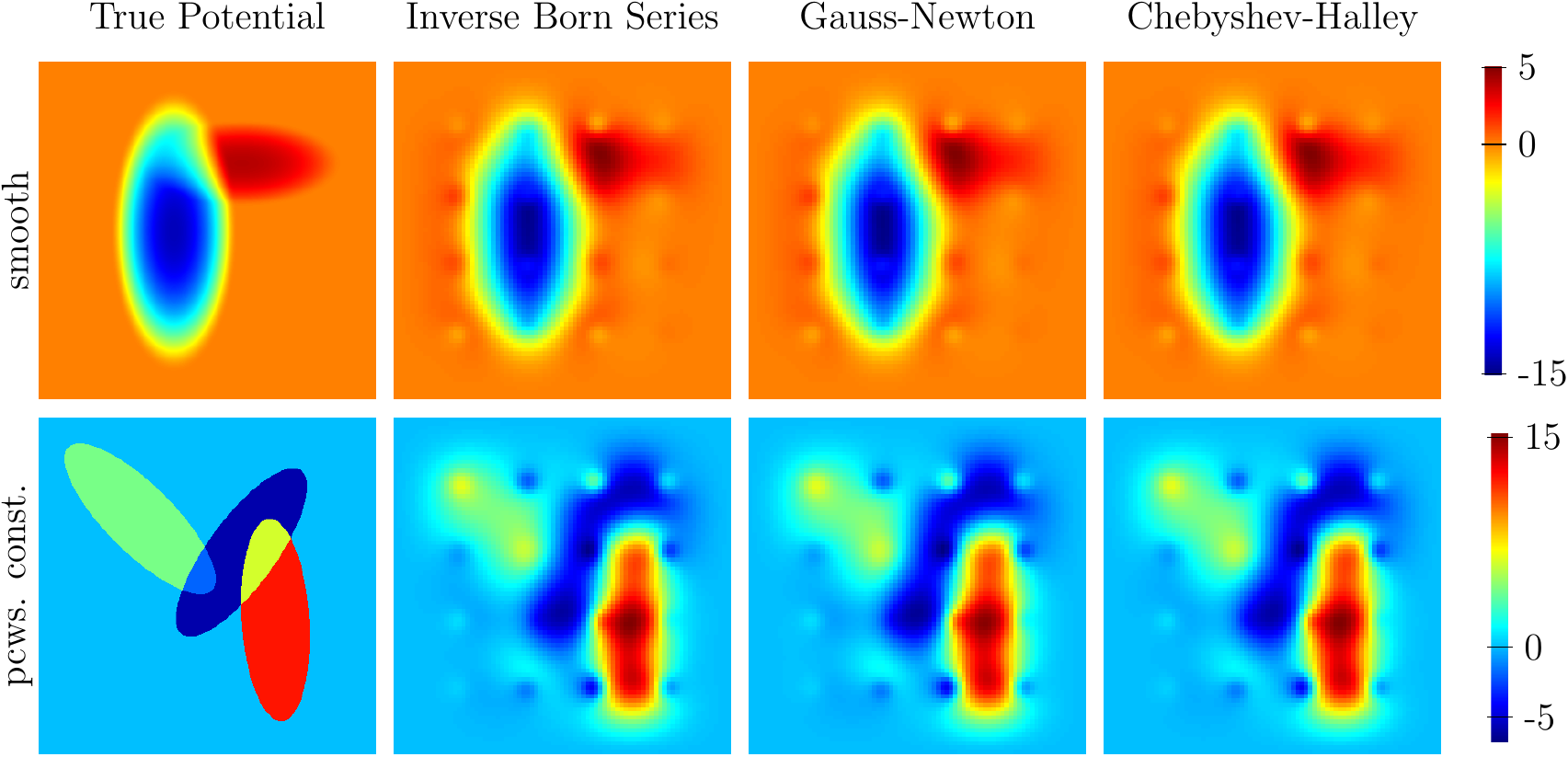}
\caption{Comparison of reconstructions of a smooth (top) and piecewise constant
(bottom) Schr\"odinger potential from discrete internal data at 16
locations and with no noise. The color scale is identical for all images in
a row.}\label{fig:qs_noiseless} 
\end{figure}

\begin{figure}
\centering
\includegraphics[width=\textwidth]{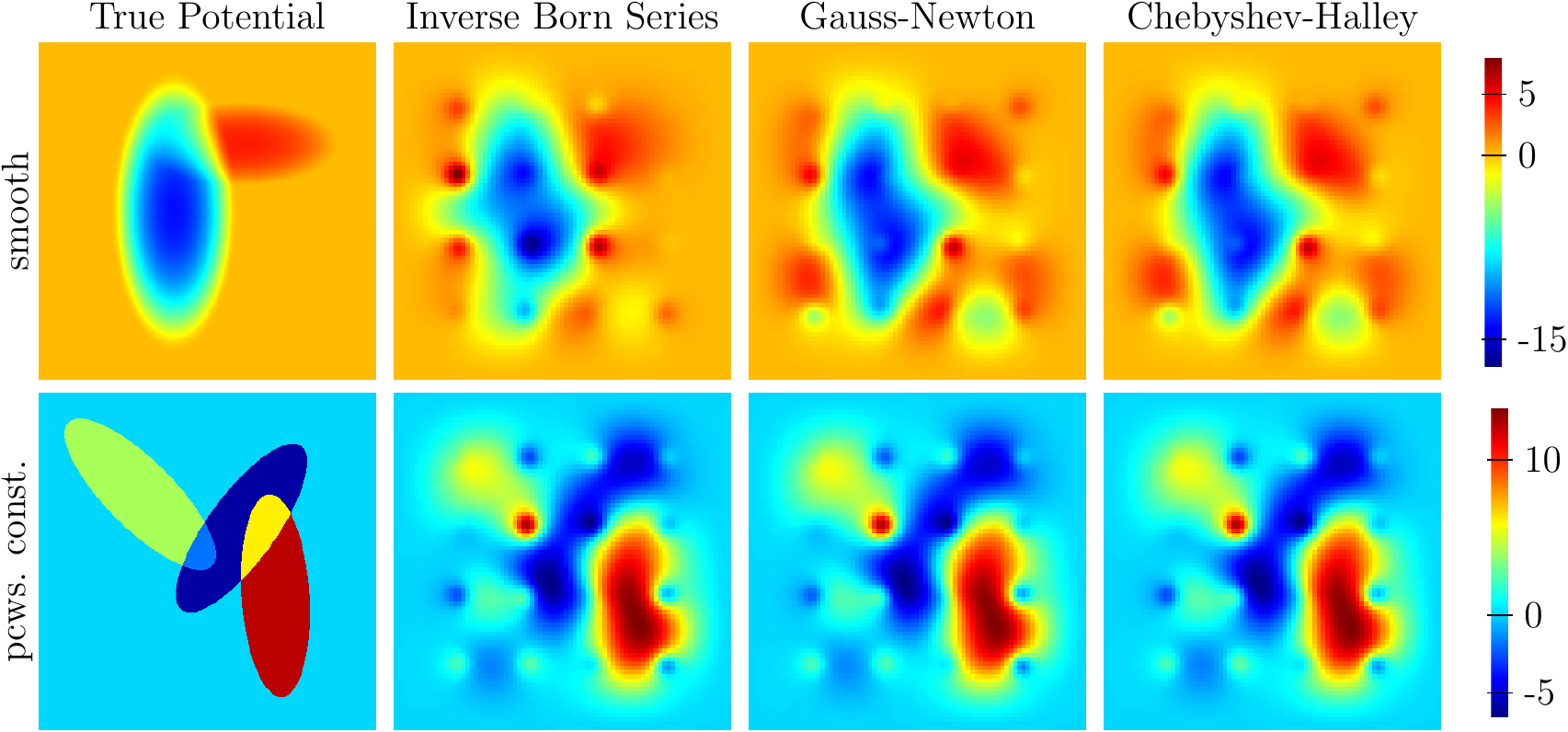}
\caption{Comparison of reconstructions of a smooth (top) and piecewise constant
(bottom) Schr\"odinger potential from discrete internal data at 16
locations and with 1\% additive Gaussian noise. The color scale is identical for all images in
a row.}\label{fig:qs_noise1}
\end{figure}

\begin{figure}
\centering
\includegraphics[width=\textwidth]{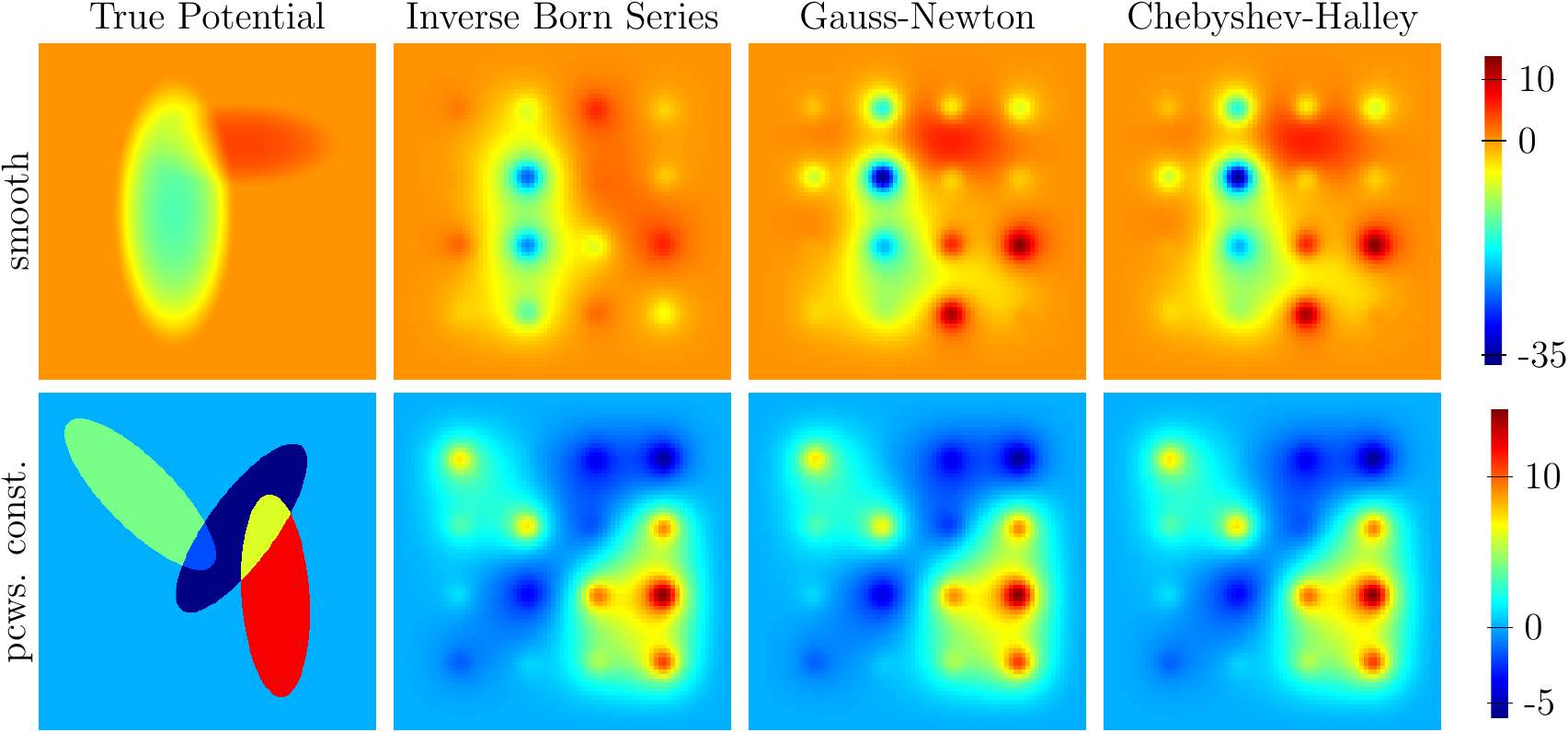}
\caption{Comparison of reconstructions of a smooth (top) and piecewise constant
(bottom) Schr\"odinger potential from discrete internal data at 16
locations and with 5\% additive Gaussian noise. The color scale is identical for all images in
a row.}\label{fig:qs_noise}
\end{figure}

%%%%%%%%%%%%%%%%%%%%%%%%%%%%%%%%%%%%%%%%%%%%%%%%%%%%%%%%%%%%%%%%%%%%%%%%
\subsection{Transient hydraulic tomography}\label{sec:ht_reconst}

In the frequency domain hydraulic tomography problem (see
section~\ref{sec:ht}), the objective is to estimate the hydraulic
conductivity $\sigma(\bx)$ and the storage coefficient $S(\bx)$ from the
frequency dependent measurements $\hM_{i,j}(\omega)$ defined in
\eqref{eq:fdmeas}. 

As before, the computational domain $\Omega=[0,1]^2$ is discretized with
a uniform grid with nodes $(kh,lh)$ for $k,l=0,\ldots 400$ and $h=1/400$. The true
storage coefficient $S$ is evaluated on this grid. The discretization of
the term $\nabla \cdot [ \sigma \nabla u]$ is done through the stencil
\[
 \begin{aligned}
(\nabla \cdot [ \sigma \nabla u])(kh,lh) \approx
 &\sigma_{k+1/2,l} \frac{u_{k+1,l} - u_{k,l}}{h^2} +
 \sigma_{k-1/2,l} \frac{u_{k-1,l} - u_{k,l}}{h^2} \\
 + &\sigma_{k,l+1/2} \frac{u_{k,l+1} - u_{k,l}}{h^2} +
 \sigma_{k,l-1/2} \frac{u_{k,l-1} - u_{k,l}}{h^2},
 \end{aligned}
\]
where $u_{k,l} \approx u(kh,lh)$ and similarly for $\sigma$. This means
that the true conductivity is evaluated at the midpoints of the
horizontal and vertical edges of the grid. The boundary points have a
different stencil that takes into account the homogeneous Dirichlet
boundary conditions, and that we do not include here for the sake of
clarity.

The frequency domain measurement functions $\hphi_i(\bx,\omega)$ we use
are, for simplicity, independent of the frequency $\omega$ and are given
in $\bx$ by the same 16 compactly supported smooth functions described
in section~\ref{sec:schro_reconst}. The measurements $\hM_{i,j}(\omega)
= \langle \hphi_j , \hu_i \rangle_{L^2(\Omega)}$ involve integrals over
$\Omega$ that are evaluated by using the trapezoidal rule on the same
grid that is used for the forward simulations. Recalling
section~\ref{sec:ht_schro}, the measurements $\hM_{i,j}(\omega)$ can
also be viewed as discrete internal measurements of a Schr\"odinger
field $v_i$ (see \eqref{eq:hydtom-liouv}) associated with the potential
$Q(\bx;\omega)$ defined in \eqref{eq:cpot} i.e. ${\bf\hM(\omega)} =
f(Q(\bx;\omega))$ with well functions
$\hphi_i/\sigma^{1/2}$. We also compute measurements for the reference
potential $Q_0=0$ on this grid using the well functions
$\hphi_i/\sigma^{1/2}$ (this corresponds to $S=0$ and $\sigma=1$).  The
measurements we use for reconstructions are $f(Q(\bx;\omega))-f(Q_0)$
(for two different frequencies).

Reconstructions are again performed on the coarse grid consisting of the
nodes $(kh_c,lh_c)$ for $k,l=0,\ldots,80$ and $h_c = 1/80$.  For each
method (inverse Born series order 5, Gauss-Newton, and
Chebyshev-Halley), an approximation of the complex Schr\"odinger
potential $Q(\bx;\omega)$ is found from the frequency domain data
$f(Q(\bx;\omega))-f(Q_0)$ for $\omega=1,10$. The parameters $S$ and
$\sigma$ are then estimated with the procedure of
section~\ref{sec:twofreq}. The grid used for solving the problems
\eqref{eq:twofreq:r1} for the conductivity is the same coarse grid used
for the reconstructions (to avoid an inverse crime). The boundary conditions for
\eqref{eq:twofreq:r1} are obtained from the true conductivity evaluated
at appropriate points.

Figure~\ref{fig:ht_nonoise} shows the reconstructions of the hydraulic
conductivity $\sigma$ and storage coefficient $S$ when data has no
noise. The conductivity $\sigma$ is smooth and $|1-\sigma| < 0.8$. The
storage coefficient $S$ is also smooth and $ -5 \leq S\leq 3$. We use
the true conductivity $\sigma$ inside the wells but the storage
coefficient $S$ inside the wells is computed, as in the rest of the
domain, from \eqref{eq:Sest}.  Reconstructions with 1\% additive zero
mean Gaussian noise are included in figure~\ref{fig:ht_noise1}. As
before this means the noise has standard deviation
$0.01\max_{i,j}|[f(Q(\bx;\omega))-f(Q_0)]_{i,j}|$, which is different for the
two frequencies we use.  Similarly, figure~\ref{fig:ht_noise} displays
reconstructions with 5\% additive zero mean Gaussian noise.

\begin{remark} 
In our experiments, the parameters $\sigma$ and $S$ are chosen so that
the corresponding Schr\"odinger potential $Q(\bx;\omega)$ and the
generated data are small enough to satisfy the hypotheses of
theorem~\ref{thm:error} (for $\omega=1,10$). This makes the contrasts in
$\sigma$ (especially) and $S$ too small to represent a realistic problem
(see e.g.  \cite{Cardiff:2011:THT}). As noted before in
remark~\ref{rem:sigma_scale}, it may be possible to overcome this by
using the inverse Born series on the hydraulic tomography problem
directly.  
\end{remark}

\begin{figure}
\centering
\includegraphics[width=\textwidth]{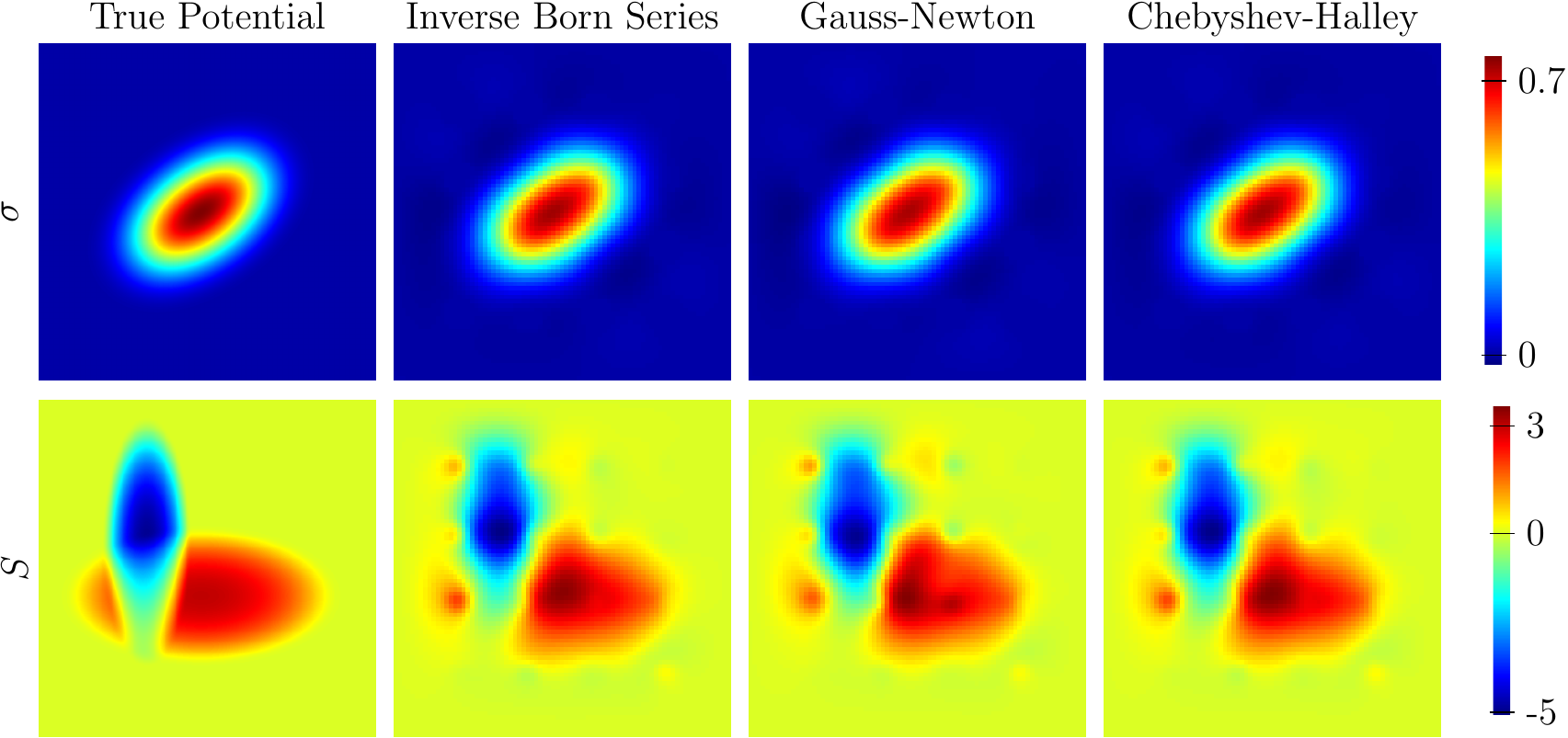}
\caption{Hydraulic tomography reconstructions of the hydraulic
conductivity $\sigma(\bx)$ (top) and the storage coefficient $S(\bx)$
(bottom) for noiseless data and different
methods.}\label{fig:ht_nonoise}
\end{figure}

\begin{figure}
\centering
\includegraphics[width=\textwidth]{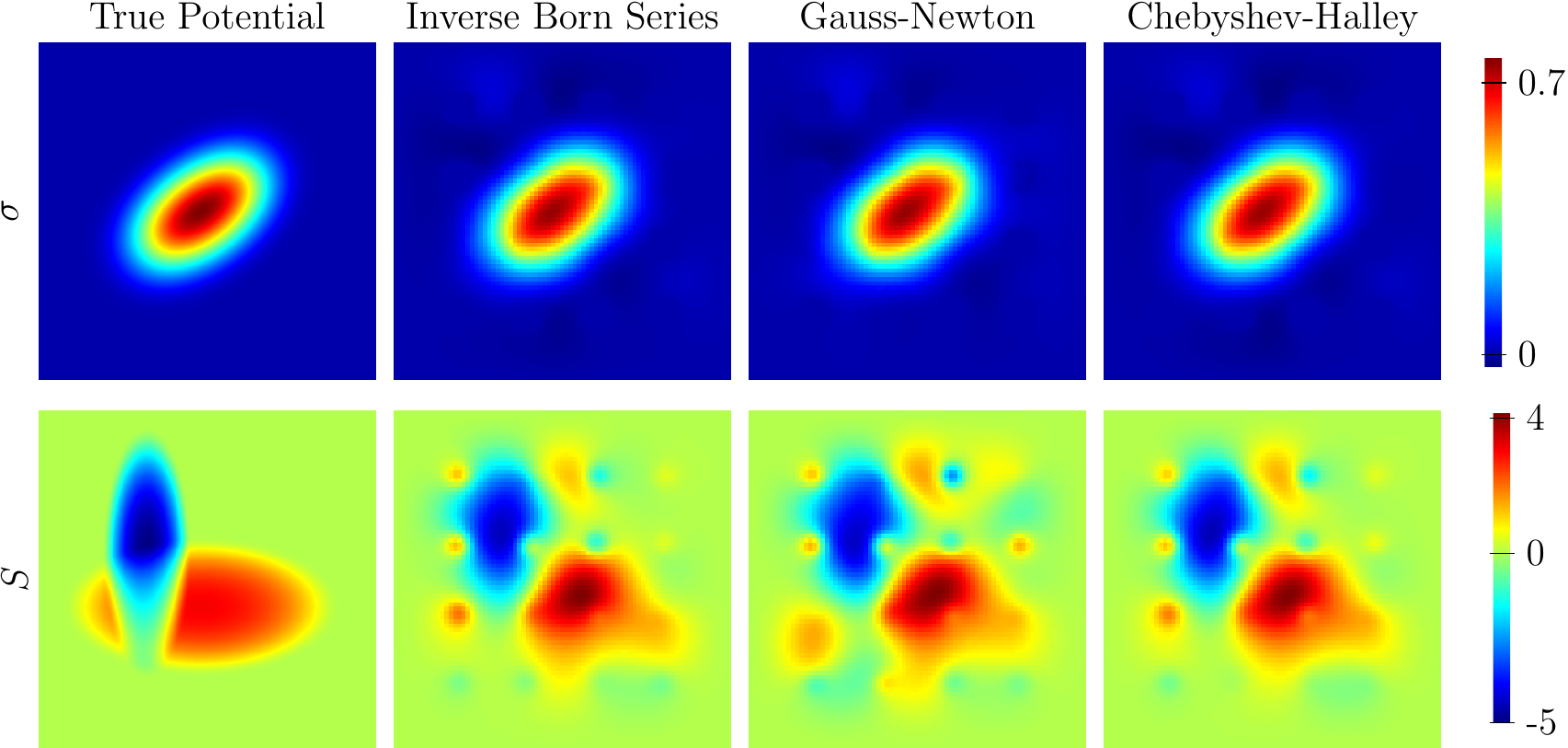}
\caption{Hydraulic tomography reconstructions of the hydraulic
conductivity $\sigma(\bx)$ (top) and the storage coefficient $S(\bx)$
(bottom) for data with 1\% additive Gaussian noise and different
methods.}\label{fig:ht_noise1}
\end{figure}

\begin{figure}[!hbtp]
\centering
\includegraphics[width=\textwidth]{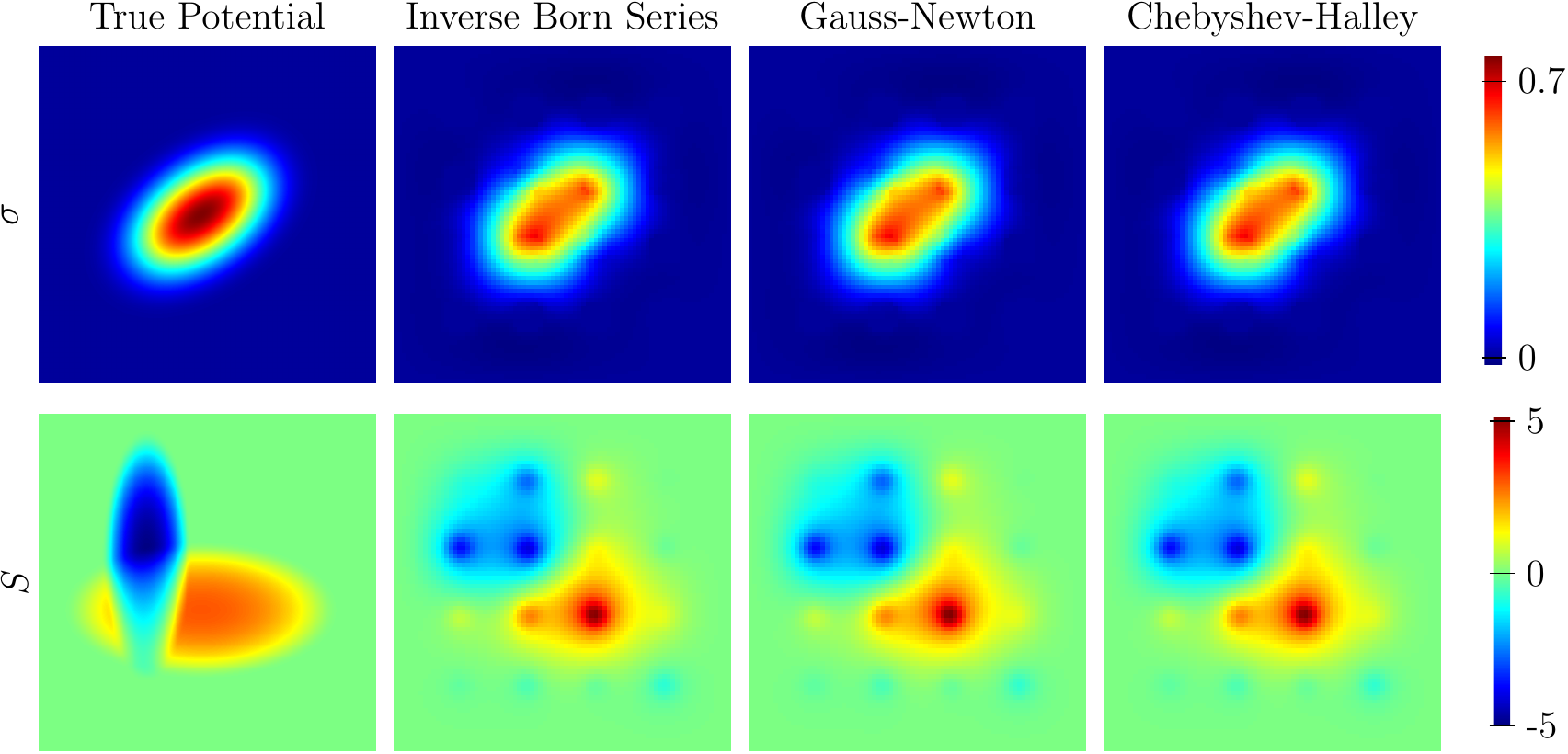}
\caption{Hydraulic tomography reconstructions of the hydraulic
conductivity $\sigma(\bx)$ (top) and the storage coefficient $S(\bx)$
(bottom) for data with 5\% additive Gaussian noise and different
methods.}\label{fig:ht_noise}
\end{figure}

%%%%%%%%%%%%%%%%%%%%%%%%%%%%%%%%%%%%%%%%%%%%%%%%%%%%%%%%%%%%%%%%%%%%%%%%
\section{Discussion}\label{sec:discussion}
We show here that with little modification, the inverse Born
series convergence results of \citet*{Moskow:2008:CSI} can be
generalized to mappings between Banach spaces. With this abstraction, we
only need to show that the forward Born operators are bounded as in
\eqref{eq:fwdest} to obtain convergence, stability and error estimates
for the inverse Born series.  Such results are then proven for the
problem of finding the Schr\"odinger potential from discrete internal
measurements. A nice byproduct of our approach is that we can relate
forward and inverse Born series coefficients (up to a symmetrization) to
the Taylor series coefficients of an analytic map and its inverse
(provided it exists). 

Since the cost of computing the $n-$th term of the inverse Born series
increases exponentially in $n$, we also consider the iterative method
obtained by restarting the inverse Born series after summing the first
$k$ terms.  We obtain a class of methods that we call RIBS($k$) and that
includes the well-known Gauss-Newton and Chebyshev-Halley iterative
methods. Our numerical results show these methods give reconstructions
comparable to those obtained with the inverse Born series.

Among the future directions of this work would be to show the RIBS($k$)
method is convergent. We conjecture that the convergence rate of
RIBS($k$) is of order $k$. The RIBS($k$) method is only locally
convergent, meaning that we need to be already close to the solution for
the method to converge. Globalization strategies that keep, when
possible, this higher order convergence rate are needed. 

The application we use to illustrate our method is a problem related to
transient hydraulic tomography. Since we convert this problem to the
problem of finding a Schr\"odinger potential and all the methods we use
here are locally convergent, the contrasts that we can deal with are far
from realistic ones. We believe that a proper globalization strategy
will allow us to deal with higher contrasts. Another important question
that we have not dealt with here is that of regularization. The only
regularization that we consider here is the choice of the linear
operator that primes the inverse Born series. By analogy with what can
be done with the Gauss-Newton method, we believe it is possible to
include specific a priori information about the true parameters by
formulating the problem as minimizing the misfit plus a penalty term
that takes into account the a priori information.

%%%%%%%%%%%%%%%%%%%%%%%%%%%%%%%%%%%%%%%%%%%%%%%%%%%%%%%%%%%%%%%%%%%%%%%%
\section*{Acknowledgments}
The authors would like to thank Liliana Borcea, Alexander V. Mamonov, Shari
Moskow and John Schotland for insightful conversations on this subject.
FGV is grateful to Otmar Scherzer for pointing out reference
\cite{Hettlich:2000:SMIP}. The work of the authors
was partially supported by the National Science Foundation grant
DMS-0934664.

%%%%%%%%%%%%%%%%%%%%%%%%%%%%%%%%%%%%%%%%%%%%%%%%%%%%%%%%%%%%%%%%%%%%%%%%
\appendix
%%%%%%%%%%%%%%%%%%%%%%%%%%%%%%%%%%%%%%%%%%%%%%%%%%%%%%%%%%%%%%%%%%%%%%%%
% !TEX root = ItInvBornSchroSparse.tex
\section{Inverse Born series in Banach spaces}
\label{app:bornproof}
The proofs in this appendix are an adaptation of the proofs in
\citet{Moskow:2008:CSI} to inverse Born series in Banach spaces. The
results are stated in section~\ref{sec:bornconv}.

%%%%%%%%%%%%%%%%%%%%%%%%%%%%%%%%%%%%%%%%%%%%%%%%%%%%%%%%%%%%%%%%%%%%%%%%
\subsection{Proof of bounds for inverse Born series coefficients
(lemma~\ref{lem:bj})}
\begin{proof}
Since $\|a_n\| \leq \alpha \mu^n$, we can estimate for $n\geq 2$:
\begin{equation}  
 \begin{aligned}
 \| b_n \| &\leq \sum_{m=1}^{n-1} \sum_{s_1+\cdots+s_m = n} \| b_m \| \| a_{s_1} \| \cdots \| a_{s_m} \| \| b_1\|^n\\
 &\leq \|b_1\|^n \sum_{m=1}^{n-1}  \| b_m \| \sum_{s_1+\cdots+s_m = n} (\alpha\mu^{s_1}) \ldots (\alpha\mu^{s_m})\\
 & = \| b_1\|^n \mu^n \sum_{m=1}^{n-1}  \| b_m \|  \alpha^m \sum_{s_1+\cdots+s_m = n} 1.
 \end{aligned}
\end{equation}
The last sum is the number of partitions of the integer $n$ into $m$ ordered parts. Hence for $n \geq 2$, we get
\begin{equation}\label{eq:binomial}
 \begin{aligned}
  \| b_n \| & \leq ( \mu \| b_1\|  )^n \sum_{m=1}^{n-1}  \| b_m \| \alpha^m { n-1 \choose m-1}\\
  & \leq  ( \mu \| b_1\|  )^n \left( \sum_{m=1}^{n-1} \| b_m \| \right)
   \left( \sum_{m=1}^{n-1} \alpha^m { n-1 \choose m-1} \right)\\
  & \leq (  \mu \| b_1\|  (\alpha+1) )^n \sum_{m=1}^{n-1} \| b_m \|.
 \end{aligned}
\end{equation}
To get the last inequality we used that 
\[
 \begin{aligned}
 \sum_{m=1}^{n-1} \alpha^m { n-1 \choose m-1} =  \sum_{m=0}^{n-2} \alpha^{m+1} { n-1 \choose m} \leq 
 \alpha \sum_{m=0}^{n-1} \alpha^{m+1} { n-1 \choose m} = \alpha (1+\alpha)^{n-1} \leq (1+\alpha)^n.
 \end{aligned}
\]
Following \cite{Moskow:2008:CSI} we can estimate the coefficients in the inverse Born series by
\begin{equation}
 \|b_n\| \leq C_n (  \mu \| b_1\|  (\alpha+1)   )^n \| b_1\|, ~\text{for}~ n \geq 2,
 \label{eq:bjbd}
\end{equation}
where the constants $C_n$ are defined recursively by
\begin{equation}
 C_2  = 1 ~ \text{and} ~
 C_{n+1} = 1 + (  (\alpha+1) \mu \| b_1\|  )^n ~ \text{for $n \geq 2$.}
\end{equation}
The constants $C_n$ are then
\begin{equation}
 C_n = \prod_{m=2}^{n-1} (1 + (  (\alpha+1)\mu \| b_1\|  )^m) 
 \leq \exp\left( \frac{1}{1 - (\alpha+1)\mu \| b_1\|} \right).
 \label{eq:cj}
\end{equation}
where the bound for $C_n$ can be derived as in \cite{Moskow:2008:CSI} and is valid when $(\alpha+1)\mu \| b_1\| < 1$, which is one of the hypothesis. The result follows from the bounds \eqref{eq:bjbd} and \eqref{eq:cj}.
\end{proof}

%%%%%%%%%%%%%%%%%%%%%%%%%%%%%%%%%%%%%%%%%%%%%%%%%%%%%%%%%%%%%%%%%%%%%%%%
\subsection{Proof of local convergence of inverse Born series
(theorem~\ref{thm:smallness:inv})}
\begin{proof}
 Using the estimate of lemma~\ref{lem:bj}, we can dominate the term of the inverse Born series by a geometric series as follows
 \begin{equation}
  \begin{aligned}
  \| b_n(d^n)\| & \leq  \beta ((\alpha+1)\mu \| b_1 \| \| d \| )^n.
  \end{aligned}
 \end{equation}
 Therefore the Born series is absolutely convergent when $(\alpha+1)\mu \| b_1 \| \| d \| < 1$, which is one of the assumptions of this theorem. The tail of the series with terms the absolute values of the inverse Born series terms,  can be estimated by noticing that:
 \begin{equation}
  \sum_{N+1}^\infty \beta ((\alpha+1)\mu \| b_1 \| \| d \| )^n  = \beta  \frac{((\alpha+1)\mu \| b_1 \| \| d \|)^{N+1}}{1 - (\alpha+1)\mu \| b_1 \| \| d \|}.
 \end{equation}
\end{proof}

%%%%%%%%%%%%%%%%%%%%%%%%%%%%%%%%%%%%%%%%%%%%%%%%%%%%%%%%%%%%%%%%%%%%%%%%
\subsection{Proof of stability of inverse Born series
(theorem~\ref{thm:stability})}
\begin{proof}
 We use an identity on tensor products to conclude that
 \begin{equation}
  \begin{aligned}
  \| h_1 - h_2 \| & \leq \sum_{n=1}^\infty \| b_n( d_1^{\otimes n} - d_2^{\otimes n} ) \|\\
  & = \sum_{n=1}^\infty \left\| b_n \left(\sum_{k=0}^{n-1} d_1^{\otimes k} \otimes (d_1-d_2) \otimes d_2^{\otimes(n-k-1)} \right) \right\| \\
  & \leq  \sum_{n=1}^\infty n M^{n-1}  \|b_n\| \| d_1 - d_2 \|.
  \end{aligned}
 \end{equation}
 The desired estimate follows from applying the estimate for the $\|b_n\|$ in lemma~\ref{lem:bj},
 \begin{equation}
  \begin{aligned}
    \| h_1 - h_2 \| & \leq  \| d_1 - d_2 \| \sum_{n=1}^\infty n M^{n-1} \beta ((\alpha+1)\mu \|b_1\|)^n\\
    & \leq \| d_1 - d_2 \| \frac{\beta}{M} \frac{1}{ (1 - M(\alpha+1) \mu \|b_1\|)^2 },
  \end{aligned}
 \end{equation}
 since we assumed that $M(\alpha+1) \mu \|b_1\| < 1$.
 Here we used the following inequality:
 \[
  \beta \sum_{n=1}^\infty n M^{n-1} \delta^n = \frac{\beta}{M} \sum_{n=1}^\infty n (M\delta)^n
  \leq \frac{\beta}{M}  \sum_{n=0}^\infty (n+1) (M\delta)^n = \frac{\beta}{M} \frac{1}{(1-M\delta)^2}
 \]
 where $\delta \equiv (\alpha+1) \mu \|b_1\|$.
\end{proof}

%%%%%%%%%%%%%%%%%%%%%%%%%%%%%%%%%%%%%%%%%%%%%%%%%%%%%%%%%%%%%%%%%%%%%%%%
\subsection{Proof of inverse Born series error estimate
(theorem~\ref{thm:error})}
\begin{proof}
 Taking the expression for $d$ in \eqref{eq:bornconv} and replacing in
 the expression for $h_*$ in \eqref{eq:bornconv} we get:
 \begin{equation}
  h_* = \sum_{n=1}^\infty c_n (h^{\otimes n}),
 \end{equation}
 where
 \begin{equation}
  \begin{aligned}
   c_1 & = b_1 a_1,\\
   c_n &= \left( \sum_{m=1}^{n-1} b_m \left( \sum_{s_1 + \cdots s_m = n} a_{s_1} \otimes \cdots \otimes a_{s_m} \right) \right) + b_n(a_1^{\otimes n}), ~ \text{for $n \geq 2$.}
  \end{aligned}
 \end{equation}
 Using the expression \eqref{eq:iborn:coeff} of $b_n$ in terms of $b_m$, $1 \leq m \leq n-1$, we get for $n \geq 2$ that
 \begin{equation}
  c_n = \sum_{m=1}^{n-1} b_m \left( \sum_{s_1 + \cdots s_m = n} a_{s_1} \otimes \cdots \otimes a_{s_m} \right) \left( I - (b_1 a_1)^{\otimes n} \right).
 \end{equation}
 Hence the reconstruction error is
 \begin{equation}
  h - h_* = (h - b_1 a_1 h) - \sum_{n=2}^\infty \sum_{m=1}^{n-1} b_m \left( \sum_{s_1 + \cdots s_m = n} a_{s_1} \otimes \cdots \otimes a_{s_m} \right) \left( h^{\otimes n} - (b_1 a_1 h)^{\otimes n} \right).
 \end{equation}
 We now estimate the error:
 \begin{equation}
  \| h - h_* \| \leq \| h - b_1 a_1 h \| + \sum_{n=2}^\infty \sum_{m=1}^{n-1}  \sum_{s_1 + \cdots s_m = n}\|b_m\| \| a_{s_1} \| \cdots \| a_{s_m} \| \left\| h^{\otimes n} - (b_1 a_1 h)^{\otimes n} \right\|.
 \end{equation}
 For $n \geq 1$ we can estimate:
 \begin{equation}
  \begin{aligned}
  \| h^{ \otimes n} - (b_1 a_1 h)^{\otimes n}  \| & = \left\| \sum_{k=0}^{n-1} h^{\otimes k} \otimes ( h - b_1a_1 h) \otimes (b_1 a_1 h)^{\otimes (n-k-1)} \right \|\\
  & \leq n M^{n-1} \| h - b_1 a_1 h \|,
  \end{aligned}
 \end{equation}
 where we used the hypothesis $\| h \| \leq M$, $\| b_1 a_1 h \| \leq M$.
 Since we assumed the Born series coefficients satisfy $\| a_n \| \leq \alpha
 \mu^n$ we get:
 \begin{equation}
  \begin{aligned}
   \| h - h_* \| &\leq
   \| h - b_1 a_1 h \|\left( 1 + \sum_{n=2}^\infty \sum_{m=1}^{n-1} \sum_{s_1 + \cdots s_m = n}\|b_m\| (\alpha \mu^{s_1}) \cdots (\alpha \mu^{s_m}) n M^{n-1} \right)\\
   & = \| h - b_1 a_1 h \| \left( 1 + \sum_{n=2}^\infty \sum_{m=1}^{n-1} \| b_m \| \alpha^m n \mu^n M^{n-1} { n-1 \choose m-1}\right).
  \end{aligned}
 \end{equation}
 Here we have used again the fact that the number of ordered partitions of $n$ into $m$ integers is:
 \[
  \sum_{s_1 + \cdots s_m = n} 1 = { n-1 \choose m-1}.
 \]
 Clearly we have that:
 \begin{equation}
   \| h - h_* \| \leq \| h - b_1 a_1 h \| \left( 1 + \sum_{n=2}^\infty n \mu^n M^{n-1} \left( \sum_{m=1}^{n-1} \| b_m \| \right) \left( \sum_{m=1}^{n-1} \alpha^m { n-1 \choose m-1}\right)\right).
 \end{equation}
 Now using the two facts:
  \begin{equation}
  \begin{aligned}
  \sum_{m=1}^{n-1} \| b_m \| &\leq \beta \sum_{m=1}^{n-1} ((\alpha+1) \mu \| b_1 \|)^m ~ \text{(lemma~\ref{lem:bj})},\\
  \sum_{m=1}^{n-1} \alpha^m { n-1 \choose m-1 } &\leq (1 + \alpha)^n ~\text{(as in \eqref{eq:binomial})},
  \end{aligned}
 \end{equation}
we get the inequality
\begin{equation}
 \| h - h_* \| \leq \| h - b_1 a_1 h \| \left(  1 + \sum_{n=2}^\infty \frac{n}{M} (\mu M (1+\alpha))^n \beta \sum_{m=1}^{n-1} ((\alpha+1) \mu \| b_1 \|)^m\right).
\end{equation}
Adding the $m=0$ term to the geometric series over $m$ and summing we get:
\begin{equation}
 \| h - h_* \| \leq \| h - b_1 a_1 h \| \left( 1 + \frac{\beta}{M} \sum_{n=1}^\infty n (\mu M (1+\alpha) )^n \frac{ 1 - ((\alpha+1) \mu \| b_1 \|)^n}{1 -(\alpha+1) \mu \| b_1 \| } \right).
\end{equation}
The hypothesis $\mu M (\alpha + 1) <1$ and $\mu (\alpha+1) \| b_1 \| < 1$ imply the quantitity in parenthesis is bounded and depends only on $M$, $\alpha$, $\beta$ and $\mu$ and $\|b_1\|$.
\end{proof}

%%%%%%%%%%%%%%%%%%%%%%%%%%%%%%%%%%%%%%%%%%%%%%%%%%%%%%%%%%%%%%%%%%%%%%%%
% Bibliography
\bibliographystyle{abbrvnat}
\bibliography{itborn}

\end{document}